\documentclass[a4paper,12pt]{article}

\usepackage[english]{babel}
\usepackage{hyperref}
\usepackage{endnotes}
\usepackage{amsmath}
\usepackage{amsfonts}
\usepackage{collect}
\usepackage[all,cmtip]{xy}
\usepackage{amsmath, amsfonts,xfrac, amsthm,amssymb,fancyhdr}

\usepackage{mathtools}

\usepackage{setspace}

\usepackage{geometry}
\usepackage{xypic,colortbl}

\usepackage{fixltx2e, ifthen, sidecap, wrapfig, xspace}

\usepackage{mathrsfs, stmaryrd}

\usepackage{sectsty}

\usepackage{
 calc, enumitem, makeidx,bbold
}

\usepackage[font=small,format=plain,labelfont=sc,textfont=it]{caption}
\usepackage[T1]{fontenc}
\usepackage[utf8]{inputenc}
\usepackage[all]{xy}
\usepackage[geometry]{ifsym}
\usepackage[cmyk]{xcolor}

\newcommand{\mathsym}[1]{{}}




{\begin{figure}[#1]
\centering 
\includegraphics[scale=#3]{#2}
}
{\end{figure}}

\newlist{enumeratep}{enumerate}{10}
\setlist[enumeratep]{label=\quad\textit{\arabic*'.},ref=\arabic*',leftmargin=*}




\newenvironment{romanlist'}[0]
{\begin{list}{\makebox[0.5cm][l]{\textit{\roman{enumi}')}}}{\usecounter{enumi}}}
{\end{list}}


\newcommand{\savelabel}[2]{\expandafter\newtoks\csname#1\endcsname
  \global\csname#1\endcsname={#2} \label{#1} #2}
\newcommand{\loadlabel}[1]{\noindent {\bf Lemma~\ref{#1}. } \textit{\the\csname#1\endcsname}
\medskip

}
\renewcommand{\setminus}{-}

\newcommand{\loadlabelthm}[1]{\medskip\noindent {\bf Theorem~\ref{#1}. }
  \noindent  \textit{\the\csname#1\endcsname}
\medskip
}

\newcommand{\loadlabelprop}[1]{\noindent {\bf Proposition~\ref{#1}. }
  \noindent  \textit{\the\csname#1\endcsname}
\medskip

}




\renewcommand{\subset}{\subseteq}





\newcommand{\atleast}[1]{{\ge n}}
\newcommand{\less}[1]{{<n}}


   




\newsavebox{\quoteitbox}

{\hspace*{\fill}{\upshape(\usebox{\quoteitbox})}\end{quote}%
\end{sloppypar}\noindent\ignorespacesafterend}

\newenvironment{quoteit*} 
{\begin{sloppypar}\noindent\slshape\begin{quote}\itshape} 
	{\end{quote}\ignorespaces\end{sloppypar}\noindent\ignorespacesafterend}

\newenvironment{quotetag*}
{~\par
	\begingroup                  
	\begin{equation*}
		 \begin{minipage}[c]{115mm}
			\it\noindent{\par}
}
{
		\end{minipage}
	\end{equation*}
	\endgroup                        
\par
\textnormal
\medskip
}


\newcommand{\Cc}{{\mathcal C}}

\newcommand{\Ee}{{\mathcal E}}

\newcommand\set[1]{\ensuremath{\{#1\}}}


\newtheoremstyle{theoremstyle}
  {3pt}
  {3pt}
  {\itshape}
  {0pt}
  {\bfseries}
  {}
  {4pt}
  {}

\theoremstyle{theoremstyle}
\newtheorem{theorem}{Theorem}[section]
\newtheorem*{theorem*}{Theorem}

\newtheorem{lemma}[theorem]{Lemma}
\newtheorem{proposition}[theorem]{Proposition}

\newtheorem{claim}{Claim}

\newtheoremstyle{remarkstyle}
  {3pt}
  {10pt}
  {}
  {0pt}
  {\itshape}
  {}
  {4pt}
  {\thmname{#1}\thmnumber{ #2}\thmnote{ (#3)}.}

\theoremstyle{remarkstyle}
\newtheorem{example}{Example}[section]

\newtheoremstyle{definitionstyle}
  {3pt}
  {3pt}
  {}
  {0pt}
  {\itshape}
  {}
  {4pt}
  {\thmname{#1}\thmnumber{ #2}\thmnote{ (#3)}.}

\theoremstyle{definitionstyle}







\newlength{\wideaslength}



\renewcommand{\subset}{\subseteq}

\newcounter{quotecount}

\newcommand{\seta}[1]{}

\newcommand{\dom}[1]{\textrm{Dom}(#1)}

\def\lsim{\mathrel{\rlap{\lower4pt\hbox{\hskip1pt$\sim$}}
    \raise1pt\hbox{$<$}}}                

\definecolor{gray1}{rgb}{0.99,0.99,0.99}
\definecolor{gray2}{rgb}{0.97,0.97,0.97}
\definecolor{gray3}{rgb}{0.95,0.95,0.95}
\definecolor{gray4}{rgb}{0.93,0.93,0.93}
\definecolor{gray5}{rgb}{0.91,0.91,0.91}
\definecolor{gray6}{rgb}{0.89,0.89,0.89}
\definecolor{gray7}{rgb}{0.87,0.87,0.87}
\definecolor{gray8}{rgb}{0.85,0.85,0.85}
\definecolor{gray9}{rgb}{0.83,0.83,0.83}
\definecolor{gray10}{rgb}{0.81,0.81,0.81}
\definecolor{gray20}{rgb}{0.71,0.71,0.71}
\definecolor{gray40}{rgb}{0.51,0.51,0.51}

\newcounter{col}

%
{\noindent\emph{Proof:} see Appendix~\expandafter\ref\expandafter%
		{col:\arabic{col}}.\stepcounter{col}
  \collect{col}%
  {\subsection{#1}%
  \expandafter\label\expandafter{col:\arabic{col}}%
  \stepcounter{col}%
  }%
  {\qed}%
}{\endcollect\qed\smallskip}

{\noindent(see Appendix~\expandafter\ref\expandafter%
		{col:\arabic{col}}.)\stepcounter{col}\collect{col}%
  {\subsection{Proof of #1}%
  \expandafter\label\expandafter{col:\arabic{col}}%
  \stepcounter{col}%
  }%
  {\qed}%
}{\endcollect}


\newcommand{\Nat}{\mathbb N}

\newcommand{\str}[1]{\mathbb{#1}}

\newcommand{\csp}{\mathrm{CSP}}
\newcommand{\nesetril}{Ne\v{s}et\v{r}il\xspace}
\newcommand{\hubicka}{Hubi\v{c}ka\xspace}
\newcommand{\fraisse}{Fra\"{\i}ss\'e\xspace}

\date{}
\newcommand{\calF}{{\mathcal F}}
\newcommand{\calC}{\mathcal C}
\DeclareMathOperator{\forbh}{Forb_\mathit{h}}

\begin{document}
	\definecollection {col}

        \title{{\bf Non-homogenizable classes \\ of finite
            structures}\footnote{A shorter preliminary version of this
            paper was published in Proceedings of 25th EACSL Annual
            Conference on Computer Science Logic (CSL), LIPIcs series,
            pp. 16:1-16:16, 2016.}}

        \author{Albert Atserias \\ Universitat Polit\`ecnica de Catalunya 
        \and Szymon Toru\'nczyk \\ University of Warsaw}

\maketitle
\newcommand{\myparagraph}[1]{\paragraph{#1.}}
\newcommand{\oldnote}[1]{}

\begin{abstract}\noindent
Homogenization is a powerful way of taming a class of finite
structures with several interesting applications in different areas,
from Ramsey theory in combinatorics to constraint satisfaction
problems (CSPs) in computer science, through (finite) model theory.  A
few sufficient conditions for a class of finite structures to allow
homogenization are known, and here we provide a necessary condition.
This lets us show that certain natural classes are not
homogenizable: 1) the class of locally consistent systems of linear
equations over the two-element field or any finite Abelian group, and
2) the class of finite structures that forbid homomorphisms from a
specific MSO-definable class of structures of treewidth two. In
combination with known results, the first example shows that, up to
pp-interpretability, the CSPs that are solvable by local consistency
methods are distinguished from the rest by the fact that their classes
of locally consistent instances are homogenizable.  The second example
shows that, for MSO-definable classes of forbidden patterns, treewidth
one versus two is the dividing line to homogenizability.

\end{abstract}

\section{Introduction}



A relational structure with a countable domain is called homogeneous
if it is highly symmetric: any isomorphism between any two of its
finite induced substructures extends to an automorphism of the whole
structure. In many areas of combinatorics, logic, discrete geometry,
and computer science, homogeneous structures abound, often in the form
of nicely behaved limit objects for classes of finite
structures. Typical examples include the Rado graph $\cal R$,
which can be seen as the limit of the
class of all finite graphs; the linear order of the rational numbers
$\cal Q$, seen as the limit of all finite linear orders; or the
countable Urysohn space $\cal U$, the limit of all rational metric
spaces. The literature on the subject is very extensive; we refer the
reader to \cite{Macpherson2011} for a recent survey.

Homogeneous structures arise as limits of 
well-behaved classes of
finite structures
in a way made precise by \fraisse's theorem,
which describes them combinatorially in a finitary manner.
The theorem states that a homogeneous structure is 
characterized, up to isomorphism, by its {age}, i.e., the class of its finite induced substructures. Moreover, classes of finite structures arising as ages of homogeneous structures are precisely Fra\"{\i}ss\'e classes, i.e., classes closed under taking induced substructures and under amalgamation -- a form of glueing pairs of structures along a common induced substructure
 (see
\cite{Hodges1997} and Section~\ref{sec:prelim} for precise
definitions). 

Thanks to \fraisse's theorem, combinatorial arguments 
involving finite structures can often be replaced by,
or aided by, arguments involving highly symmetric, infinite 
structures. 
In combinatorics, for example, homogeneous structures appear
unavoidably in structural Ramsey theory \cite{Nesetril1996}. At the
intersection between combinatorics and computer science, homogeneous
structures appear in the theory of logical limit laws for various
models of random graphs \cite{KolaitisProemelRothschild1987}.  In
computer science proper, homogeneous structures appear in the theory
of constraint satisfaction problems \cite{Bodirsky2008}, and automata
theory \cite{BojanczykKlinLasota2011}, and verification~\cite{BojanczykSegoufinTorunczyk2013}.

One of the advantages of working  with homogeneous structures, rather than classes of finite structures,
is that their automorphism groups are very rich.
For example, over a finite
relational signature, the homogeneity of the structure immediately
implies that, up to automorphism, 
it has finitely many elements, pairs of elements, 
triples, etc.
In model
theoretic terms, this means that the structure is $\omega$-categorical
by the classical Ryll-Nardzewski theorem, and its first-order theory
admits elimination of quantifiers. In turn, since in any such
structure there are only finitely many first-order definable relations
of each arity, homogeneous structures over finite relational
signatures are, in a strong technical way,  close to being
finite. 

Thus, with Fra\"{\i}ss\'e's theorem in hand and the many applications
of homogeneous structures in mind, it becomes quite important a task
to identify more Fra\"{\i}ss\'e classes. More generally, one would
like to identify classes of finite structures that are perhaps not
Fra\"{\i}ss\'e classes themselves, but 
appear as reducts of some
Fra\"{\i}ss\'e class over a richer yet finite signature. Such classes
of finite structures are called \emph{homogenizable}
\cite{Covington1990}. The point in case is that the lifted
Fra\"{\i}ss\'e class can be thought of as taming its reduct by
providing a homogeneous structure that plays the role of limit object
for it. Many of the application examples mentioned above do actually
go through lifted Fra\"{\i}ss\'e classes and their corresponding
homogeneous Fra\"{\i}ss\'e limits. 

A noticeable amount of work has gone into providing sufficient
conditions for a class of finite structures to be homogenizable.
Instances include the model-theoretic methods of Covington
\cite{Covington1990}, and the combinatorial explicit constructions of
\hubicka and \nesetril \cite{HubickaNesetril2015}. 
Here we provide a combinatorial necessary condition for
homogenizability (Theorem~\ref{thm:necessary} in
Section~\ref{sec:necessary}). This allows us to prove that certain
natural classes of finite structures previously considered in the
literature are not homogenizable. Since it is known that every Ramsey
class is homogenizable \cite{Nesetril2005}, our result may also be
relevent in the context of Ne{\v{s}}et{\v{r}}il's classification
programme of Ramsey classes. See \cite{Nesetril2005} and the
introduction of the recent survey \cite{HubickaNesetril2016} for more
on this.

Our first example of a non-homogenizable class comes from the theory
of constraint satisfaction problems (CSPs). We show that the class of
locally consistent systems of linear equations over the two-element
field is not homogenizable. More generally, the result holds for
systems of equations over any finite Abelian group. This answers a
question first raised by the first author of this paper in
\cite{AtseriasWeyer2009}. Precisely, by a locally consistent system of
equations we mean one whose satisfiability cannot be refuted by the
$(j,k)$-consistency algorithm for small $j$ and $k$, which is a
well-studied heuristic algorithm for solving CSPs. Moreover, in
combination with the resolution of the Bounded Width Conjecture by
Barto and Kozik \cite{BartoKozik2014}, this shows that the
constraint languages whose classes of locally consistent instances are
homogenizable are, up to pp-interpretability, precisely those that are
solvable by local consistency methods. All this is worked out in
Section~\ref{sec:consistent}.

In Section~\ref{sec:forbidden} we give a second example of a
non-homogenizable class that, in this case, is motivated by the works
of Hubi\v{c}ka and Ne\v{s}et\v{r}il \cite{HubickaNesetril2015}, and
 Erd\"{o}s\oldnote{AA: the proper LaTeX accent for Erdos doesn't compile!},
Tardif, and Tardos \cite{ErdosTardifTardos2013}. It was shown in
\cite{HubickaNesetril2015} that every class of finite structures that
is of the form $\forbh({\cal F})$, where~${\cal F}$ is a
\emph{regular} class of connected finite structures, is
homogenizable. In words,~$\forbh({\cal F})$ is the class of finite
structures that do not admit homomorphisms from any structure in
${\cal F}$. The notion of regularity considered in
\cite{HubickaNesetril2015} is closely related to the notion of
regularity in automata theory, and agrees with it on coloured paths
and trees. However, our second example shows that even if ${\cal F}$
is MSO-definable and has maximum treewidth two, the class
$\forbh({\cal F})$ need not be homogenizable. Since MSO-definability
coincides with automatic regularity for coloured paths and trees, this
shows that for MSO-definable classes, treewidth one versus two of the
forbidden structures in ${\cal F}$ is the dividing line to
homogenizability.

\section{Preliminaries} \label{sec:prelim}

\myparagraph{Signatures, structures, reducts, and expansions} A
relational signature $\Sigma$ is a set of relation symbols
$R_1,R_2,\ldots$, each with an associated natural number called its
arity.  In this paper, we consider only finite relational signatures.
A $\Sigma$-structure $\str A = (A; R_1^{\str A}, R_2^{\str A},
\ldots)$ is composed of a set $A$, called its domain, and a relation
$R^{\str A} \subseteq A^{k}$ on $A$ for each $R$ in $\Sigma$, where
$k$ is the arity of $R$. We say that $R^{\str A}$ is the
interpretation of $R$ in $\str A$.  We write $|\str A|$ to denote the
cardinality of the domain of $\str A$.  A $\Sigma$-structure is
sometimes referred to as a structure over the signature $\Sigma$. If
$\Sigma^+$ is a signature that contains $\Sigma$ and $\str A^+$ is a
$\Sigma^+$-structure, then the $\Sigma$-reduct of $\str A^+$ is the
structure $\str A$ obtained from $\str A^+$ by forgetting all
relations from $\Sigma^+-\Sigma$. In this case, we also say that $\str A^+$
is an expansion of $\str A$. Expansions and reducts are also called
lifts and shadows, ~respectively.

\myparagraph{Substructures, homomorphisms, and embeddings} If $\str A$
is a $\Sigma$-structure and $X$ is a subset of its domain $A$, we
write $\str A[X]$ for the substructure of $\str A$ induced by $X$,
that is, the $\Sigma$-structure with domain $X$ in which each relation
symbol $R$ in $\Sigma$ is interpreted by $R^{\str A} \cap X^{k}$,
where $k$ is the arity of $R$.

Let $\str A$ and $\str B$ be structures over the same relational
signature $\Sigma$. Let $A$ and $B$ denote their domains. A
homomorphism from $\str A$ to $\str B$ is a mapping $f : A \rightarrow
B$ for which the inclusion $f(R^{\str A}) \subseteq R^{\str B}$ holds
for every $R$ in $\Sigma$. The homomorphism is strong if in addition
the inclusion $f(A^k \setminus R^{\str A}) \subseteq B^k \setminus
R^{\str B}$ holds for every $R$ in $\Sigma$, where $k$ is the arity of
$R$.  A monomorphism from $\str A$ to $\str B$ is an injective
homomorphism. Whenever $A$ is a subset of $B$ and the inclusion
mapping $A\to B$ is a monomorphism, we say that $\str A$ is a
substructure of $\str B$.  An embedding from $\str A$ to $\str B$ is
an injective strong homomorphism. Whenever $A$ is a subset of $B$ and
the inclusion mapping $A\to B$ is an embedding, we say that $\str A$
is an induced substructure of $\str B$. An isomorphism from $\str A$
to $\str B$ is a surjective embedding.  If there is an isomorphism
from $\str A$ to $\str B$ we say that the two structures are
isomorphic.
If $f : A \rightarrow B$ is a partial mapping
with domain $X \subseteq A$ and image $Y \subseteq B$, we say that $f$
is a partial homomorphism from $\str A$ to $\str B$ if it is a
homomorphism from $\str A[X]$ to $\str B[Y]$. 
We write ${\str B}\choose {\str A}$ to denote the set of
all embeddings from $\str A$ to $\str B$.  
Sometimes we write $f : \str A \to \str B$ to mean that $f$ is a mapping
from the domain of $\str A$ to the domain of $\str B$.

\myparagraph{Amalgamation}
If $\str B$ and $\str C$ are $\Sigma$-structures with domains $B$ and
$C$, we write $\str B \cup \str C$ for their union, i.e.\ the
$\Sigma$-structure with domain $B \cup C$ and relations $R^{\str B
  \cup \str C} = R^{\str B} \cup R^{\str C}$ for every $R$ in
$\Sigma$. Let $f$ and $g$ be embeddings from the same structure
$\str A$ into structures $\str B$ and $\str C$, respectively. 
The
structure $\str D$ is an amalgam of $\str B$ and $\str C$ through $f$
and $g$ if there exist embeddings $f'$ and $g'$ from $\str B$
to $\str D$ and $\str C$ to $\str D$, respectively, such that the diagram in Figure~\ref{fig:amalg} commutes, i.e., 
$f'
\circ f = g' \circ g$.
\begin{figure}[ht]
{
    \begin{center}
    \includegraphics[height=2in]{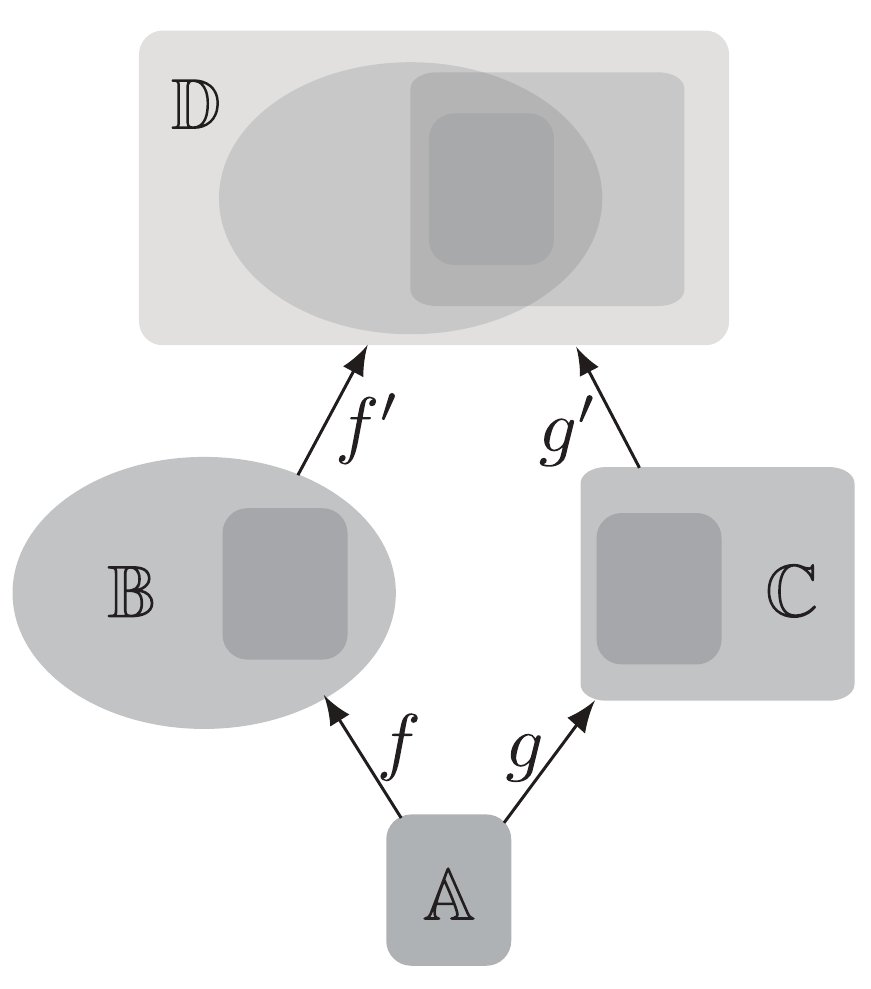}
    \end{center}
}  \caption{Amalgamation of $\str B$ and $\str C$ through $f$ and $g$. All mappings are embeddings.}
  \label{fig:amalg}
\end{figure}

 We say that $\str D$ is a strong amalgam if
$f'(B) \cap g'(C) = (f' \circ f)(A) = (g' \circ g)(A)$, where $A$, $B$
and $C$ denote the domains of $\str A$, $\str B$ and $\str C$,
respectively. We say that $\str D$ is a free amalgam if it is strong
and, additionally, $\str D = \str D[f'(B)] \cup \str D[g'(C)]$.  We
also say that $\str D$ is the union of $\str B$ and $\str C$
amalgamated along $\str A$ through $f$ and $g$ via $f'$ and $g'$.
Note that the free amalgam of $\str B$ and $\str C$ through $f$ and
$g$ is uniquely defined up to isomorphism,
and is isomorphic to the disjoint union of $\str B$
and $\str C$, quotiented by the equivalence relation
identifying $f(x)$ with $g(x)$, for $x$ in $A$.
 We denote this free amalgam $f \cup_{\str A} g$.  When $f$ and
$g$ are  implicit, we denote it $\str B\cup_{\str A}\str C$. We
also say that $\str B$ and $\str C$ are glued~along~$\str A$.

\myparagraph{Classes of structures}
All our structures will have finite or countably infinite
domain. Moreover we assume that all structures have a domain that is a
subset of a common background countable set, say $\Nat$. For a fixed
signature $\Sigma$, a class of structures is a set of structures that
is closed under isomorphisms, i.e.\ if $\str A$ and $\str B$ are
isomorphic structures and $\str A$ belongs to the class, then $\str B$
also belongs to the class.  A class of structures $\cal C$ is closed
under amalgamation if for every two embeddings $f$ and $g$ from
the same structure $\str A$ in $\cal C$ into structures $\str B$ and
$\str C$ in $\cal C$, there exists in $\cal C$ an amalgam of $\str B$ and $\str C$
through $f$ and $g$. A class of finite structures is an
amalgamation class, also called a \fraisse class, if it is
closed under taking induced substructures and amalgamation\oldnote{AA:
  The discussion in this endnote is outdated. Anyway, I never really
  understood why our result holds for $\omega$-categorial classes as
  defined below. AA: We must be more explicit by what we mean by
  amalgamation: free amalgamation? strong amalgamation?  Etc. See
  another endnote below. Sz: I think we are working with arbitrary
  amalgamation.  In fact, I think we even need a weaker version: say a
  class of finite structures is \emph{$\omega$-categorical} if it is
  the set of finitely generated substructures of an
  $\omega$-categorical structure.  Any amalgamation class over finite
  relational signature is an $\omega$-categorical class, since its
  \fraisse limit is $\omega$-categorical. I think $\omega$-categorical
  classes are slightly more general than amalgamation classes over
  finite relational signatures -- they are exactly amalgamation
  classes over finite signatures with relation and function symbols,
  but of uniformly bounded growth (for every $n$, there is a bound on
  the size of $n$-generated structures).  For example, the set of all
  finite-dimensional vector spaces over a fixed finite field is an
  $\omega$-categorical class.}.  
  For example, the class of all finite
  graphs is an amalgamation  class -- in fact, it is closed under free amalgamation -- so is the class of all finite digraphs.
  The class of all finite linear orders is also an amalgamation class, although it is not closed under  free amalgamation.  \fraisse's theorem 
  states that a class is \fraisse if and only if it is the class of finite induced substructures of a homogeneous structure.
  
  For two signatures $\Sigma$ and $\Sigma^+$ with the second
  containing the first, if ${\cal C}$ and ${\cal C}^+$ are classes of
  $\Sigma$-structures and $\Sigma^+$-structures, respectively, then we
  say that ${\cal C}$ is the $\Sigma$-reduct of ${\cal C}^+$ if ${\cal
    C}$ is the class of $\Sigma$-reducts of the structures in ${\cal
    C}^+$.

  \myparagraph{Homogenizable classes} We say that a class of
  $\Sigma$-structures is \emph{homogenizable} if there is a signature
  $\Sigma^+$ extending $\Sigma$, and an amalgamation class ${\cal
    C}^+$ of $\Sigma^+$-structures, such that $\cal C$ is the
  $\Sigma$-reduct of ${\cal C}^+$.  For a class of $\Sigma$-structures
  $\calF$, let $\forbh(\calF)$ denote the class of all finite
  $\Sigma$-structures $\str A$ such that for no $\str F$ in $\calF$
  there is a homomorphism from $\str F$ to $\str A$.  \hubicka and
  \nesetril define a notion of regularity, which we call HN-regularity
  (we omit its technical definition), and prove in Theorem~3.1
  from~\cite{HubickaNesetril2015} that if $\calF$ is a HN-regular
  class of finite connected structures, then $\forbh(\cal F)$ is
  homogenizable. In particular, if $\calF$ is finite, then
  $\forbh(\cal F)$ is homogenizable.

\begin{example}
  Let $\Sigma$ be the signature that consists of one binary predicate
  $\vec E$ and two unary predicates $S$ and $T$.  Let $\str P_n$
  denote a simple directed $\vec E$-path with~$n$ nodes from a unique
  $S$-colored node to a unique $T$-colored node.  The class ${\cal
    F} = \{ \str P_n : n \geq 1 \}$ is HN-regular, 
    and therefore, by~\cite{HubickaNesetril2015}, the class $\forbh(\cal F)$ is homogenizable.
    It consists of digraphs 
    whose nodes are possibly labeled with $S$ or $T$, 
    and there is no directed path from an $S$-labeled node to a $T$-labeled node.
      We show that
  $\forbh(\cal F)$ 
   is homogenizable by a direct
  construction\oldnote{AA: What I wrote ended up being longer and
    messier than I wished; please rewrite if you have a better way to
    phrase it.}.  
    Let $\Sigma^+$ be the extension of $\Sigma$
    by two unary predicates $I$ and $O$.
    Let $\cal C^+$ consist of all $\Sigma^+$-structures $\str A^+$ such that the domain of $\str A^+$
    is partitioned into 
    $I^{\str A^+}$ and $O^{\str A^+}$, and that
    $S^{\str A^+}\subset I^{\str A^+}$,
    $T^{\str A^+}\subset O^{\str A^+}$, and there are no  $\vec E$-edges starting in $I^{\str A^+}$ and ending in $O^{\str A^+}.$ Then $\cal C^+$ is an amalgamation class, as
    it is closed under free amalgamation.
    The class $\forbh (\cal F)$ is the $\Sigma$-reduct
    of $\cal C^+$: a structure $\str A$ in $\forbh(\cal F)$
    expands to a structure $\str A^+$ in $\cal C^+$,
    in which $I^{\str A^+}$ is the set of vertices 
    reachable from $S^{\str A}$ by a directed $\vec E$-path, and $O^{\str A^+}$
    is its complement.
    \qed

\end{example}

\section{Necessary condition for homogenizability} \label{sec:necessary}

Fix a finite relational signature $\Sigma$. Except for the examples,
in this section all structures are over this signature, or over a
signature $\Sigma^+$ that extends $\Sigma$.  Before we state the
necessary condition for homogenizability we need some notation and
terminology.


Let $\cal C$ be a class of finite structures. If $\str A$, $\str L$
and $\str R$ are structures in $\cal C$, and $L : \str A \to \str L$
and $R : \str A \to \str R$ are embeddings such that no amalgam of
$\str L$ and $\str R$ through $L$ and $R$ is in $\cal C$, then we say
that $L : \str A \rightarrow \str L,\; R : \str A \rightarrow \str R$
is a \emph{diagram that witnesses failure} of amalgamation of $\cal
C$. 
We illustrate the definitions with a running example.
\begin{example}[Running example]\label{ex:run2}
    Let $\str F_n$ denote the structure depicted in
    Figure~\ref{fig:structure}, with $n$ vertices in the middle column.
    \begin{figure}[ht]
    \begin{center}
    \includegraphics[height=2in]{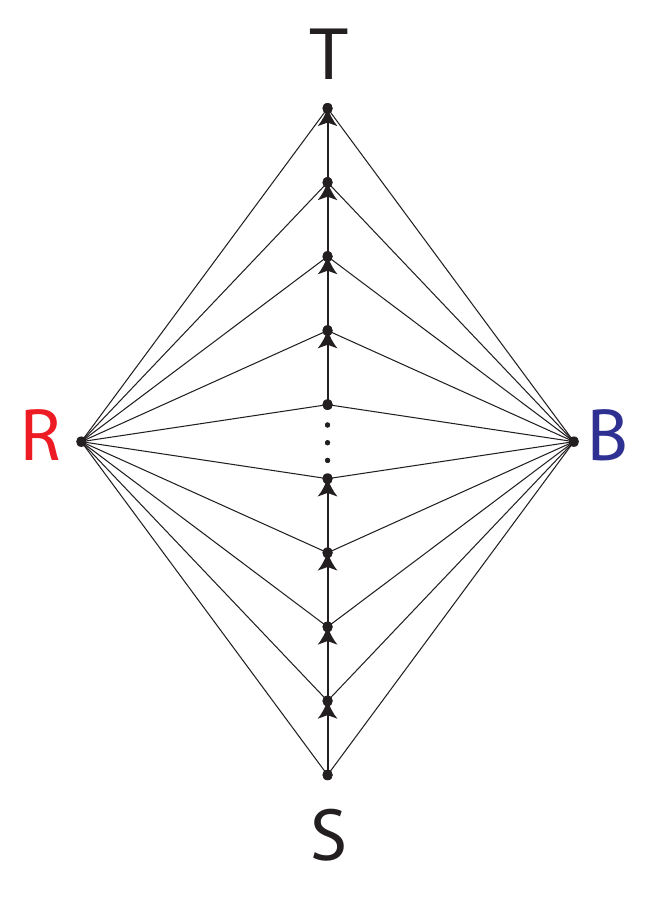}
    \end{center}
    \caption{Forbidden structure $\str F_n$\label{fig:structure}.}
    \end{figure}
    The signature $\Sigma$ of this structure consists of one binary
    predicate~$E$, the undirected edges, one binary predicate~$\vec E$,
    the vertical directed edges, and four unary predicates~$R$ (for \emph{red}), $B$ (for \emph{blue}), $S$
    (for \emph{source}),
    and~$T$ (for \emph{target}), each appearing in the structure exactly once.     
    Observe
    that the colours $S$ and $T$ ensure that $\cal F$ is an antichain in
    the homomorphism pre-order, i.e.\ there are no homomorphisms from
    $\str F_n$ to $\str F_m$ if $n \not= m$.
    Let $\cal C=\forbh(\cal F)$. 
    In the running example, we will demonstrate that the class
    $\cal C$ is not homogenizable.

Choose a large natural number $n$.  Let $\str L$ denote the left part
of the structure $\str F_n$ obtained by removing the blue vertex
(labeled $B$). Symmetrically, let $\str R$ denote the right part of
$\str F_n$ obtained by removing the red vertex (labeled $R$). Let
$\str A$ denote the intersection of $\str L$ and $\str R$, i.e., the
$\vec E$-path with $n$ vertices starting at the $S$-labeled vertex and
ending at the $T$-labeled vertex. Let $L:\str A\to \str L$ and $R:\str
A\to \str R$ be the inclusion mappings. Then any amalgamation of $L$
and $R$ necessarily is a homomorphic image of $\str F_n$. Hence $L :
\str A \to \str L,\; R: \str A \to \str R$ is a diagram that witnesses
failure of amalgamation of $\calC$. \qed
\end{example}

Let $L:\str A\to \str L,\; R:\str A\to \str R$ be a diagram that
witnesses failure of amalgamation of $\cal C$.  For a structure $\str
J$ and a partial mapping $C:{{\str J}\choose{\str A}} \to \{L,R\}$,
let $\str J^C$ be the structure that is obtained by glueing to $\str
J$, for each $\pi$ in $\dom C$, a fresh copy of either $\str L$ or
$\str R$ depending on whether $C(\pi)=L$ or $C(\pi)=R$.  More
formally, $\str J^C$ is defined by induction on the cardinality of the
domain of $C$: if $\dom C=\emptyset$, then $\str J^C=\str J$;
otherwise, if $C=C'\cup\set{\pi\mapsto \sigma}$, where $\pi\in {{\str
    J}\choose{\str A}}$ and $\sigma\in \set{L,R}$, then define $\str
J^{C}={\pi'}\cup_{\str A} \sigma$, where $\pi':\str A\to \str J^{C'}$
is $\pi:\str A\to\str J$ composed with the identity embedding from
$\str J$ to~$\str J^{C'}$.

For a natural number $m$ and a $\Sigma$-structure $\str A$ with domain
$A$, let $\str A\otimes m$ denote the structure with domain $A\times
[m]$ in which the interpretation of a relation $R$ in $\Sigma$ of
arity~$k$ is the set of all tuples
$((a_1,i_1),(a_2,i_2),\ldots,(a_k,i_k))$ where $(a_1,\ldots,a_k)\in
R^{\str A}$ and $i_1,\ldots,i_k\in [m]$. Observe that every function
$f:A\to [m]$ induces an embedding $\pi_f: \str A\to \str A\otimes m$,
defined by $\pi_f(a)=(a,f(a))$. Let $\Ee_{\str A,m}$ denote the set of
all embeddings of the form $\pi_f$ for $f:A\to [m]$. In particular,
$\Ee_{\str A,m}$ is a subset of ${{{\str A\otimes m}\choose{\str A}}}$
containing exactly $m^{|A|}$ embeddings.

A diagram $L:\str A\to \str L,\; R:\str A\to \str R$ is 
\emph{confusing} for $\cal C$ if the following conditions hold:
\begin{enumerate} \itemsep=0pt
	\item it witnesses failure of amalgamation of $\cal C$, and
	\item for every natural number $m$,
  if $\str J=\str A\otimes m$,
	then for every coloring $C:\Ee_{\str A,m}\to\set{L,R}$
	the structure $\str J^C$ belongs to the class $\calC$.	
\end{enumerate}
Its \emph{order} is the cardinality of the domain of $\str A$.

\begin{theorem}\label{thm:necessary}
If $\cal C$ is a homogenizable class of finite
structures, then there exists a natural number $r$
such that every confusing diagram for $\cal C$ has
order at most $r$.
\end{theorem}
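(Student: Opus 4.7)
The plan is to argue by contradiction: assume $\cal C$ is homogenized by an amalgamation class $\cal C^+$ over $\Sigma^+\supseteq\Sigma$, take a confusing diagram $L:\str A\to\str L,\;R:\str A\to\str R$ of order $r$, and show that $r$ is bounded in terms of $\Sigma^+$. Define $X_L$ (resp.\ $X_R$) as the set of $\Sigma^+$-expansions $\str A^+\in\cal C^+$ of $\str A$ that extend through $L$ (resp.\ $R$) to some $\str L^+\in\cal C^+$ (resp.\ $\str R^+\in\cal C^+$). The first step is to observe that $X_L\cap X_R=\emptyset$: if some $\str A^+$ lay in the intersection, with witnessing $\str L^+$ and $\str R^+$, amalgamation in $\cal C^+$ would yield $\str D^+\in\cal C^+$ whose $\Sigma$-reduct is an amalgam of $\str L$ and $\str R$ through $L$ and $R$ in $\cal C$, contradicting the failure-of-amalgamation part of the diagram.

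The second step is to invoke the confusing property with $m=2$ to build an injection from the $2^{2^r}$ colorings $C:\Ee_{\str A,2}\to\set{L,R}$ into the finite set of $\Sigma^+$-expansions of $\str J=\str A\otimes 2$. For each $C$, pick any $(\str J^C)^+\in\cal C^+$ expanding $\str J^C$ (which lies in $\cal C$ by the confusing property), and let $\str J^+_C\in\cal C^+$ be its induced substructure on $\str J$, using that $\cal C^+$ is closed under induced substructures. The crucial point is that for each $\pi_f\in\Ee_{\str A,2}$, the fresh copy of $\str L$ or $\str R$ glued at $\pi_f$ in $\str J^C$ appears as an induced substructure of $(\str J^C)^+$, and therefore places the pullback of $\str J^+_C$ along $\pi_f$ in $X_L$ if $C(\pi_f)=L$ and in $X_R$ if $C(\pi_f)=R$. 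Since $X_L\cap X_R=\emptyset$, the coloring $C$ can be read off from $\str J^+_C$ alone by inspecting each pullback, so the assignment $C\mapsto\str J^+_C$ is injective.

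To finish, I compare cardinalities. The number of colorings is $2^{2^r}$, whereas the number of $\Sigma^+$-expansions of the $2r$-element structure $\str J$ is bounded by $2^{|\Sigma^+|(2r)^k}$ with $k$ the maximum arity in $\Sigma^+$. For $r$ exceeding a threshold depending only on $|\Sigma^+|$ and $k$, the double exponential $2^{2^r}$ outgrows the single exponential $2^{|\Sigma^+|(2r)^k}$, contradicting injectivity and yielding a uniform bound $r\leq r_0$. The main point requiring care is Step 2: the recovery of $C$ from $\str J^+_C$ rests critically on the disjointness $X_L\cap X_R=\emptyset$ (itself a consequence of amalgamation closure of $\cal C^+$) together with closure of $\cal C^+$ under induced substructures, which jointly force the pullback along each $\pi_f$ to land in $X_L\cup X_R$ with membership dictated unambiguously by $C(\pi_f)$.
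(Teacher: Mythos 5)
Your proof is correct, and it takes a genuinely different (and more elementary) route from the paper's. The paper derives Theorem~\ref{thm:necessary} from Lemma~\ref{lem:technical}, whose proof is information-theoretic: a random two-coloring $C$ of the embeddings in $\Ee$ has, with positive probability, the property that for \emph{every} $\Sigma^+$-expansion of $\str J^C$ there are two embeddings with the same pullback but different colors, and from that one coloring $C$ an explicit failure of amalgamation in $\cal C^+$ is then read off. You instead fix $m=2$ and run a direct counting argument: each of the $2^{2^r}$ colorings $C:\Ee_{\str A,2}\to\{L,R\}$ yields a $\Sigma^+$-expansion $\str J^+_C$ of the $2r$-element structure $\str J=\str A\otimes 2$, and the disjointness $X_L\cap X_R=\emptyset$ (forced by amalgamation closure of $\cal C^+$) guarantees that the assignment $C\mapsto\str J^+_C$ is injective, contradicting the single-exponential bound on the number of $\Sigma^+$-expansions of $\str J$ once $r$ is large. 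The underlying observations are shared---pullbacks of any expansion land in $X_L$ or $X_R$ according to the color, and $\cal C^+$-amalgamation forces these two sets to be disjoint---but you package them into a cardinality comparison rather than an entropy bound. A small bonus of your approach is that it only uses the $m=2$ instance of the confusing-diagram condition, whereas the paper chooses $m$ large as a function of $r$. What the paper's route buys is the more general Lemma~\ref{lem:technical}, which applies to arbitrary $(L,R)$-confusions $(\str J,\Ee)$ satisfying a counting inequality and not just those of the special form $(\str A\otimes m,\Ee_{\str A,m})$; that generality is more machinery than the theorem itself needs, but could be useful when the symmetric tensor construction is unavailable.
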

This theorem is the main technical result of this paper.  Before we
prove it, we illustrate it by applying it to our running example.

\begin{example} \label{ex:run3}
  Fix natural numbers $m$ and $n$. Let $L:\str A\to \str L$ and
  $R:\str A\to \str R$ be defined as in Example~\ref{ex:run2}.  The
  structure $\str J=\str A\otimes m$ is depicted in
  Figure~\ref{fig:structJ}.
  \begin{figure}[ht]
  \begin{center}
  \includegraphics[height=2in]{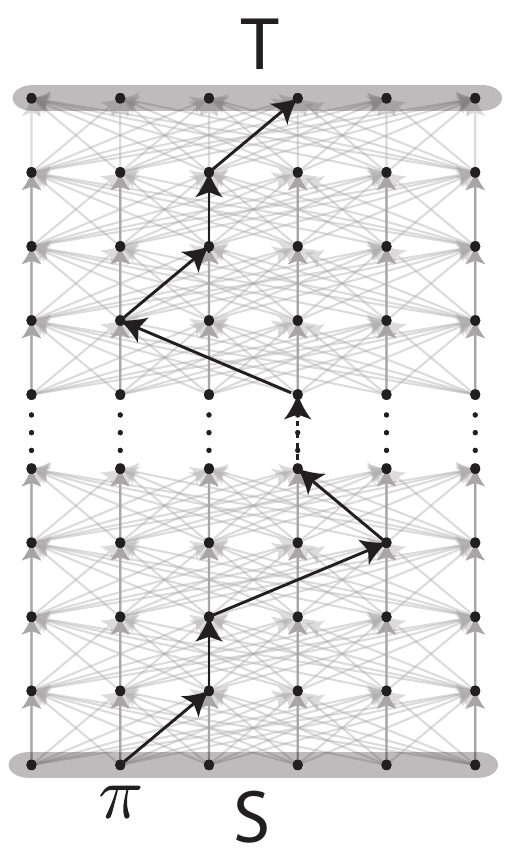}
  \end{center}
  \caption{The structure $\str A\otimes m$\label{fig:structJ},
  with an embedding ${\pi\in \Ee_{\str A,m}}$.
  }
  \end{figure}
Its domain is $[n]\times [m]$, and every element
$(i,j)\in[n]\times[m]$ with $i \leq n-1$ is connected by an $\vec
E$-edge to every element $(i+1,k)\in[n]\times[m]$. The embeddings
$\Ee_{\str A,m}$ correspond to functions $f:[n]\to [m]$.  If
$C:{\Ee_{\str A,m}}\to\set{L,R}$ is a coloring, then $\str J^C$ is
obtained by considering all functions $f:[n]\to [m]$, and connecting
every vertex along the path $\set{(i,f(i)):1\le i\le n}$ to a
fresh vertex which is red if $C(f)=L$, and blue if $C(f)=R$.  Observe
that no structure $\str F$ in $\cal F$ maps homomorphically to $\str
J^C$. Therefore, $\str J^C$ belongs to $\calC=\forbh(\cal F)$.  Since
$m$ is arbitrary, this shows that the diagram $L:\str A\to \str L,\;
R:\str A\to \str R$ is confusing for $\calC$.  Since its order is
$|A|=n$, and $n$ is arbitrary, Theorem~\ref{thm:necessary} implies
that $\calC$ is not a reduct of any amalgamation
class.\qed
\end{example}

Theorem~\ref{thm:necessary} follows easily from 
Lemma~\ref{lem:technical} stated below.

Let $L:\str A\to \str L,\; R:\str A\to \str R$
witness failure of amalgamation of $\calC$.
An $(L,R)$-\emph{confusion} for $\cal C$ is a structure $\str J$ in
$\calC$, together with a set $\cal E\subset {{\str J} \choose {\str
    A}}$, such that $\str J^{C}$ is in $\cal C$ for every coloring
$C:{\cal E}\to \set{L,R}$.  For $\mathcal{E} \subseteq {\str J \choose
  \str A}$ and a natural number $r$ bounded by the cardinality of the
domain of $\str A$, let $\mathcal{E}_r$ denote the set of all
restrictions $\pi|_X$ of $\pi$ in $\mathcal E$, where $X$ ranges over
all $r$-element subsets of the domain of~$\str A$.

\begin{lemma}\label{lem:technical}
Let $r$ and $t$ be natural numbers, and let $\cal C$ be a class of\,
$\Sigma$-structures. There exist numbers $p$ and $q$ (depending on $r$
and $t$ only) such that the following condition implies that $\cal C$
is {not} a reduct of any amalgamation class over a signature with at
most $t$ predicates of arity at most $r$:
\begin{quote}
  there is a diagram $L : \str A \rightarrow \str L,\; R : \str A
  \rightarrow \str R$ that witnesses failure of amalgamation of $\cal
  C$ and of order at least $r$, and there is an $(L,R)$-confusion
  $(\str J,\cal E)$ for $\cal C$ satisfying
  \begin{align}\label{eq:counting}
  |{\mathcal E}| > p
     \cdot |{\mathcal E}_{r}| + q^{{|\str A| \choose
         r}}.    
  \end{align}
\end{quote}
  \end{lemma}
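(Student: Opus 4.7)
The plan is to argue by contradiction. Assume $\mathcal C$ is the $\Sigma$-reduct of an amalgamation class $\mathcal C^+$ over a signature $\Sigma^+\supseteq\Sigma$ containing at most $t$ new relation symbols, each of arity at most $r$; I will produce a coloring $C:\mathcal E\to\{L,R\}$ such that $\str J^C$ has no expansion in $\mathcal C^+$, contradicting the hypothesis that $(\str J,\mathcal E)$ is an $(L,R)$-confusion. Fix $p=p(r,t)$ proportional to the base-$2$ logarithm of the maximum number of $\Sigma^+$-expansions of any $\Sigma$-structure on at most $r$ elements (so $p=O(tr^r)$), and fix $q=q(r,t)$ so that every $\Sigma$-structure on $n\ge r$ elements admits at most $q^{{n\choose r}}$ many $\Sigma^+$-expansions; such a $q$ exists because each new predicate has arity at most $r$, so every expansion is determined by its restrictions to the $n\choose r$ many $r$-element subsets.

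The structural crux is the disjointness of the two ``sides''. Let $T_L$ be the set of $\Sigma^+$-expansions $\tau$ of $\str A$ for which there exists $\str L^+\in\mathcal C^+$ expanding $\str L$ with $L^*\str L^+=\tau$, and define $T_R$ analogously for $\str R$. Then $T_L\cap T_R=\emptyset$: a common $\tau$ would, by the amalgamation property of $\mathcal C^+$, yield an amalgam of $\str L^+$ and $\str R^+$ in $\mathcal C^+$ along their shared $\Sigma^+$-expansion $\tau$ of $\str A$, whose $\Sigma$-reduct would amalgamate $\str L$ and $\str R$ through $L$ and $R$ in $\mathcal C$, contradicting the failure-of-amalgamation hypothesis.

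Next, I claim that every realizable coloring arises from some $\Sigma^+$-expansion of $\str J$. For any $C$, the $(L,R)$-confusion hypothesis gives $\str J^C\in\mathcal C$, so $\str J^C$ admits an expansion $\str K^+\in\mathcal C^+$; closure of $\mathcal C^+$ under induced substructures then yields an expansion $\str J^+\in\mathcal C^+$ of $\str J$ (namely the induced substructure of $\str K^+$ on the domain of $\str J$), and restricting $\str K^+$ further to each glued copy of $\str L$ or $\str R$ forces $\pi^*\str J^+\in T_{C(\pi)}$ for every $\pi\in\mathcal E$. By the disjointness $T_L\cap T_R=\emptyset$, the coloring $C$ is uniquely determined by $\str J^+$, so every realizable $C$ has the form $C_{\str J^+}$ for some expansion $\str J^+\in\mathcal C^+$ of $\str J$.

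Finally I count. Since $\Sigma^+$ has arity at most $r$, each $\pi^*\str J^+$ depends on $\str J^+$ only through the values $\sigma^*\str J^+$ for $\sigma\in\mathcal E_r$, each one of at most $2^p$ options; this gives at most $2^{p|\mathcal E_r|}$ possible $C_{\str J^+}$ and, \emph{a fortiori}, at most $2^{p|\mathcal E_r|+q^{{|\str A|\choose r}}}$ realizable colorings. Since there are $2^{|\mathcal E|}$ colorings in total, the inequality $|\mathcal E|>p|\mathcal E_r|+q^{{|\str A|\choose r}}$ produces an unrealizable $C$, completing the contradiction. The main obstacle I anticipate is establishing the structural correspondence between realizable colorings and $\Sigma^+$-expansions of $\str J$ -- in particular, combining $T_L\cap T_R=\emptyset$ with the arity bound to make $C_{\str J^+}$ factor through the local $r$-restriction data -- while the counting itself is short and in fact gives an even stronger bound than the one stated.
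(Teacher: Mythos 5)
Your proof is correct, but it takes a genuinely different route from the paper's. The paper's argument proceeds via a probabilistic (entropy) step: it chooses $C$ at random and shows, by comparing entropies, that $C$ can be chosen so that \emph{every} partial-spot coloring $D$ leaves some pair $\pi\sim_D\sigma$ with $C(\pi)\neq C(\sigma)$; only at the very end, after translating a given $\Sigma^+$-expansion $\str J^+$ into a $D$, does the paper use amalgamation of the two pullback structures to reach a contradiction. You instead isolate the disjointness $T_L\cap T_R=\emptyset$ as the structural heart of the argument, prove it up front, and then argue directly that each $\Sigma^+$-expansion of $\str J^C$ \emph{determines} $C$ via the map $\pi\mapsto\pi^*\str J^+$; combined with the observation that $\pi^*\str J^+$ factors through the $r$-restrictions $(\sigma^*\str J^+)_{\sigma\in\mathcal E_r}$, this gives a direct count of realizable colorings, with no randomness or entropy needed. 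The extra entropy term $q^{{|\str A|\choose r}}$ in the paper arises because the paper's coloring $D$ only forces $C$ to be \emph{constant} on $\sim_D$-classes but not which colour each class gets (so an auxiliary variable $E$ is needed); your disjointness insight shows that the expansion pins $C$ down completely, so the $q^{{|\str A|\choose r}}$ term is not actually needed, and your bound $2^{p|\mathcal E_r|}$ is strictly stronger. What the paper's entropy route buys is perhaps a more modular statement (Claim~\ref{counting} is a purely combinatorial fact about colorings of spots versus partial spots, with no reference to amalgamation), whereas your route buys a shorter, more elementary proof and a sharper bound. Both yield the lemma as stated.
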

%
%
  Before we prove Lemma~\ref{lem:technical} we show how
  Theorem~\ref{thm:necessary} follows from it.
\begin{proof}[Proof of Theorem~\ref{thm:necessary}]
Suppose that $\calC$ has confusing diagrams of arbitrarily large order.
For every two fixed natural numbers $r$ and $t$, we apply
Lemma~\ref{lem:technical} to conclude that $\cal C$ is not a reduct of
an amalgamation class over a signature with $t$ symbols of arity at
most $r$. Let $p$ and $q$ be as in the statement of the lemma.
Consider a confusing diagram $L:\str A\to \str L,\; R:\str A\to \str
R$ and let $n$ be its order.  Fix a natural number $m$,
and let $\str J= \str A\otimes m$ and $\Ee=\Ee_{\str A,m}$.  Then
$(\str J,\Ee)$ is an $(L,R)$-confusion for ${\cal C}$, by the
definition of confusing diagram,
and 
$|{\mathcal E}| = m^n$ and $|{\mathcal E}_{r}| = m^{r}$.
 Since the order $n$ of the diagram
can be chosen arbitrarily large, we can assume $n>r$. Taking $m$ large
enough, so that $p\cdot m^{n-1}>q^{n\choose r}$ and $m>2p$, we get:
	    \begin{align*}
			p\cdot |{\mathcal E}_{r}| + q^{{|\str A| \choose
			         r}}=
	    p \cdot m^{r} + q^{n \choose r} \le
		p\cdot m^{n-1}+p\cdot m^{n-1}<m^n=|\mathcal E|,
	    \end{align*}
	    which gives condition~\eqref{eq:counting}
	    in Lemma~\ref{lem:technical}.
Since $t$ and $r$ were arbitrary, this proves that
            $\cal C$ is not the reduct of an amalgamation class.
\end{proof}

It remains to prove the lemma.
\begin{proof}[{Proof of Lemma~\ref{lem:technical}}]
  Fix natural numbers $r$ and $t$. In anticipation of the proof, let
  $q$ be the maximum number of atomic types of $(r+1)$-tuples over any
  signature with at most $t$ predicates of arity
  at most $r$, and let $p = \lceil{\log_2(q)}\rceil$.  Suppose that
  $\Cc$ is a class of $\Sigma$-structures as in the lemma, with a
  diagram $L : \str A \rightarrow \str L,\; R : \str A \rightarrow
  \str R$ that witnesses its failure of amalgamation, and an $(L,R)$-confusion $(\str J,\mathcal E)$ satisfying  condition~\eqref{eq:counting} from Lemma~\ref{lem:technical}. 
  

Let $\str B^+$ be a $\Sigma^+$-structure with domain $\str B$ and 
let  $f:A\to B$ be a function from some set $A$ to $B$.
Define the \emph{pullback} $f^*(\str B)$ as the $\Sigma^+$-structure
with universe $A$ in which the interpretation of a relation
symbol $R$ of $\Sigma^+$ of arity $k$ is $f^{-1}(R^{\str B})$, i.e., the inverse image of the interpretation of $R$ in $\str B$ under the mapping $f:A^k\to B^k$. By definition, $f^*(\str B)$ is the unique $\Sigma^+$-structure
on $A$ for which $f$ is a strong homomorphism into $\str B$.

By definition of the structure $\str J^C$,
there is a distinguished embedding of $\str J$ into~$\str J^C$.
Therefore, by composition, any embedding $\pi:\str A\to \str J$ in $\cal E$ defines an embedding of $\str A$ into~$\str J^C$, denoted $\hat\pi:\str A\to \str J^C$. Note that for any expansion $\str J^+$ of $\str J^C$,
the pullback $\hat\pi^*(\str J^+)$ is an expansion of $\str A$, which is isomorphic (via $\hat \pi$) to an induced
substructure of $\str J^+$.

  \begin{claim} \label{postcounting} There is a coloring $C :
    {\mathcal E} \rightarrow \{L,R\}$ such that, for every expansion $\str
    J^+$ of $\str J^C$ over the signature $\Sigma^+$, there are two embeddings $\pi$ and
    $\sigma$ in $\mathcal E$ such that the pullbacks
    $\hat\pi^*(\str J^+)$ and $\hat\sigma^*(\str J^+)$ are equal, but $C(\pi) \not=
    C(\sigma)$.
  \end{claim}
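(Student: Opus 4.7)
The plan is to prove Claim~\ref{postcounting} by a counting argument: I will show that the number of \emph{bad} colorings $C$ — those for which some expansion $\str J^+$ is a counterexample to the conclusion — is strictly smaller than the total count $2^{|\mathcal E|}$, so that a \emph{good} $C$ must exist. A coloring $C$ is bad for an expansion $\str J^+$ exactly when $C$ is constant on every equivalence class of the relation $\sim_{\str J^+}$ on $\mathcal E$ defined by $\pi \sim_{\str J^+} \sigma$ iff $\hat\pi^*(\str J^+) = \hat\sigma^*(\str J^+)$. The argument will proceed in two steps: bounding (i) the number of distinct partitions $\sim_{\str J^+}$ that can arise as $\str J^+$ varies, and (ii) for each such partition, the number of colorings constant on its blocks.

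For step (i), the key observation is that the pullback $\hat\pi^*(\str J^+)$, a $\Sigma^+$-structure on $A$, is determined — since all symbols of $\Sigma^+$ have arity at most $r$ — by the atomic types in $\str J^+$ of the tuples $\pi(\bar a)$ for $\bar a \in A^{\leq r}$; and each such type depends only on the restriction of $\pi$ to the support of $\bar a$, which is an element of $\mathcal E_r$. Thus any expansion $\str J^+$ induces a single function $\tau_r \colon \mathcal E_r \to T_r$, where $T_r$ denotes the set of atomic $r$-types in $\Sigma^+$; this function fully determines $\sim_{\str J^+}$. Using $|T_r| \leq q$ and $p = \lceil \log_2 q \rceil$, the number of distinct such $\tau_r$, and hence of distinct partitions $\sim_{\str J^+}$, is at most $q^{|\mathcal E_r|} \leq 2^{p \cdot |\mathcal E_r|}$.

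For step (ii), the partition $\sim_{\str J^+}$ has at most as many blocks as there are possible pullback structures on $A$. Any $\Sigma^+$-structure on $A$ is determined by its restriction to each $r$-element subset of $A$, with at most $q$ possibilities per subset, so there are at most $q^{\binom{|A|}{r}}$ possible pullbacks; hence the number of colorings constant on the blocks of any $\sim_{\str J^+}$ is at most $2^{q^{\binom{|A|}{r}}}$. Multiplying the two bounds yields at most $2^{p \cdot |\mathcal E_r| + q^{\binom{|A|}{r}}}$ bad colorings in total, which by the standing hypothesis~\eqref{eq:counting} is strictly less than $2^{|\mathcal E|}$, so a good coloring must exist.

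The main subtlety I anticipate is verifying that the two occurrences of $q$ are both legitimate upper bounds — once for the number of atomic $r$-types, and once for the number of $\Sigma^+$-structures on an $r$-element set — given $q$'s precise definition in terms of atomic $(r{+}1)$-types. Both reductions should amount to padding with a dummy coordinate, but the bookkeeping needs to be done carefully so that the inequality~\eqref{eq:counting} goes through exactly as written.
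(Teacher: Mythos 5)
Your proposal is correct and arrives at the same quantitative bound as the paper, but by a noticeably different route. The paper first isolates a purely combinatorial intermediate statement (Claim~\ref{counting}: there is a two-coloring $C$ of the spots such that, for every $q$-coloring $D$ of the partial spots, some pair of spots satisfies $\pi\sim_D\sigma$ but $C(\pi)\neq C(\sigma)$), proves it by an entropy argument applied to a uniformly random $C$, and only afterwards specializes $D$ to the coloring of partial spots by atomic $r$-types in $\str J^+$ to deduce Claim~\ref{postcounting}. You fold the two steps together, reasoning directly about the partitions $\sim_{\str J^+}$, and you replace the entropy computation by a direct union bound over bad colorings. The two arguments are computationally the same --- the paper's chain $h(E)+h(D)<h(C)$ is precisely $\log_2\bigl(q^{|\mathcal E_r|}\cdot 2^{q^{\binom{|A|}{r}}}\bigr)<\log_2\bigl(2^{|\mathcal E|}\bigr)$ --- but your phrasing is the more elementary one, at the cost of not extracting Claim~\ref{counting} as a reusable lemma. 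One subtlety that your argument handles correctly and is worth making explicit when you write it out: the expansions $\str J^+$ range over expansions of $\str J^C$, which depends on $C$, so a union bound indexed by expansions would be circular. You sidestep this by indexing instead over functions $\mathcal E_r\to T_r$ (equivalently, over partitions of $\mathcal E$), neither of which depends on $C$; the paper dodges the same issue by quantifying Claim~\ref{counting} over \emph{all} colorings $D$, not just those coming from some expansion. Your anticipated worry about the two roles of $q$ resolves exactly as you expect: both the number of atomic $r$-types and the number of $\Sigma^+$-structures on an $r$-element set inject into the set of atomic $(r{+}1)$-types by padding, so $q$ bounds both.
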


  We show how the claim yields the lemma. Figure~\ref{fig:glue} illustrates the proof.

  Assume that $\Cc$ is the
  class of $\Sigma$-reducts of a class of $\Sigma^+$-structure $\Cc^+$. To reach a contradiction, suppose that $\Cc^+$ is closed
  under induced substructures and amalgamation. Let $C$ be as in the claim. Since $\str
  J^C$ belongs to $\Cc$ by the definition of confusion, there exists an expansion
  $\str J^+$ of $\str J^C$ in $\Cc^+$. Let $\pi$ and $\sigma$ be as in
  the conclusion of the claim, and suppose without loss of 
  generality that $C(\pi)=L$ and $C(\sigma)=R$.
  By the definition of $\str J^C$,
  the embeddings $\pi:\str A\to \str J$ and $L:\str A\to \str L$ induce  embeddings $\hat\pi,\pi', f$,
  such that the diagram to the left below commutes:
  $$\xymatrix{
  &\str A\ar_{\pi}[dl] \ar^L[dr]\ar_{\hat \pi}[dd]\\
  \str J \ar_{\pi'}[rd] 
  &
  &\str L \ar^{f}[ld]\\
  &\str J^C}\qquad\qquad
  \xymatrix{
  &\str A\ar_{\sigma}[dl] \ar^R[dr]\ar_{\hat \sigma}[dd]\\
  \str J \ar_{\sigma'}[rd] 
  &
  &\str R \ar^{g}[ld]\\
  &\str J^C}
  $$  
   Let $\str L^+=f^*(\str J^+)$ be the pullback structure; this structure is an expansion of $\str L$. Moreover,
   $\str L^+$ belongs to the class $\cal C^+$,
   since it is a pullback under an injective mapping, 
   and hence $\str L^+$ is isomorphic to an induced substructure $\str L^+_\pi$ of $\str J^+$, which is in $\cal C^+$.

  Similarly, the embeddings $\sigma:\str A\to \str J$
  and $R:\str A\to \str R$ induce embeddings $\hat\sigma,\sigma',g$ 
  such that the diagram to the right above commutes.
  Let $\str R^+=g^*(\str J^+)$ be the pullback structure, which is an expansion of $\str R$,
  isomorphic to an induced substructure $\str R^+_\sigma$ of $\str J^+$, hence
   belongs to the class $\cal C^+$.

\begin{figure}[ht]
\begin{center}
\includegraphics[height=2.6in]{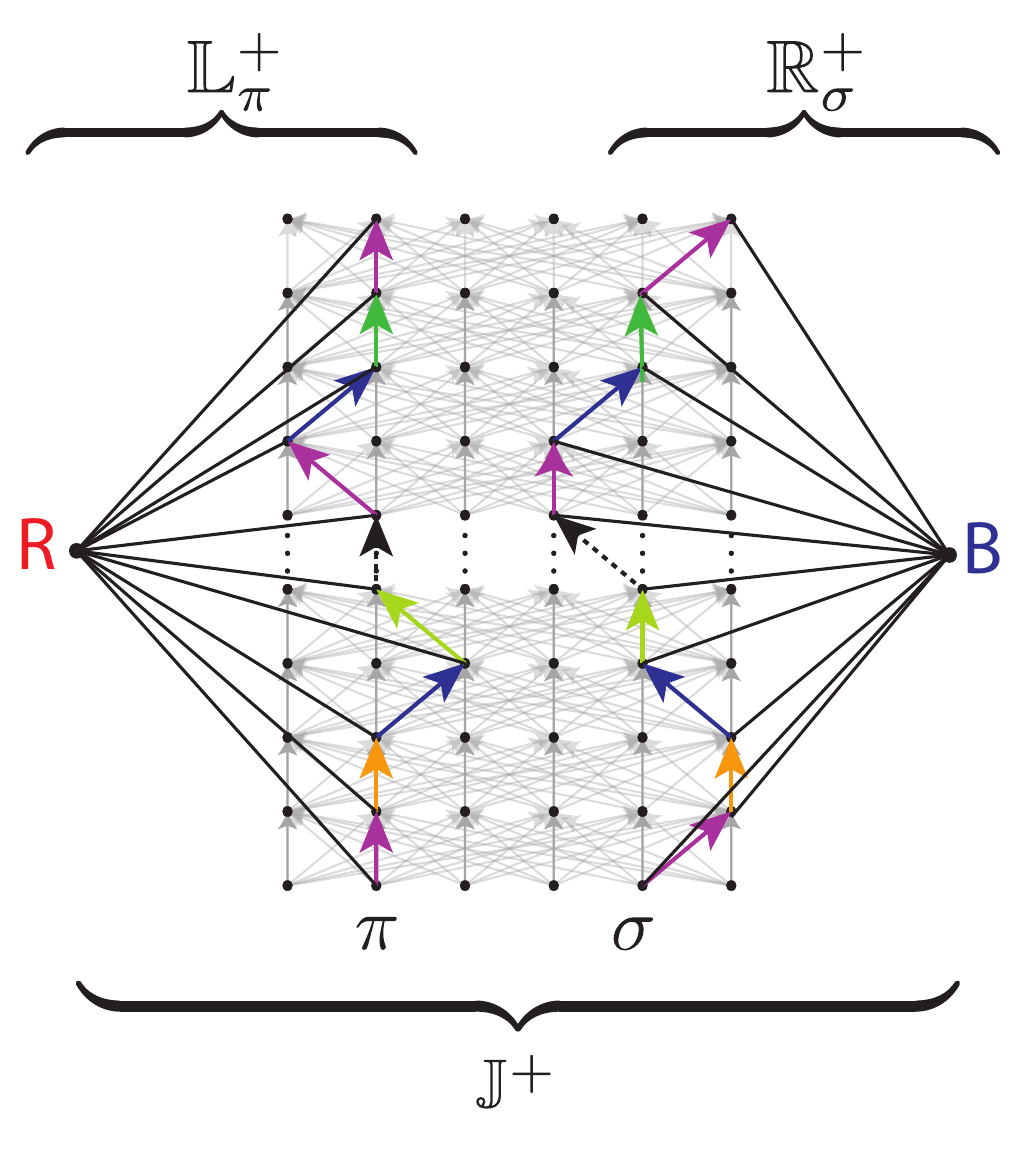}
\end{center}
\caption{    
Two embeddings $\pi,\sigma\in\cal E$
with $C(\pi)=L$ and $C(\sigma)=R$,
 in the context of the running example. The colored arrows depict various predicates
of the stipulated signature $\Sigma^+$ extending~$\Sigma$
(in general, they don't need to be binary).
The fact that the sequences of colors along $\pi$ and along $\sigma$
are the same corresponds to the assumption that 
the pullbacks $\hat\pi^*(\str J^+)$ and $\hat\sigma^*(\str J^+)$
are equal. Therefore, the marked substructures $\str L^+_\pi$ and $\str R^+_\sigma$ of $\str J^+$
(which correspond to $\str L^+$ and $\str R^+$ in the proof
via the mappings $f$ and $g$)
have an isomorphic substructure (isomorphic to 
$\str A^+$ in the proof).
An amalgamation in $\cal C^+$ of $\str L^+_\pi$ and $\str R^+_\sigma$ along this substructure would yield as a $\Sigma$-reduct an amalgamation in $\cal C$ of $\str L$ and $\str R$ along 
$\str A$, a contradiction.
\label{fig:glue}}
\end{figure}

Let $\str A^+$ be the pullback $\hat\pi^*(\str J^+)$,
which, by the claim, is the same as the pullback $\hat\sigma^*(\str J^+)$. 
Note that by commutativity of the  diagram to the left above, the pullback $\hat\pi^*(\str J^+)$ is the same as 
the pullback $L^*(\str L^+)$. Similarly, $\hat\sigma^*(\str J^+)$ is the same as $R^*(\str R^+)$. In other words, 
$L$ is an embedding of $\str A^+$ into $\str L^+$,
and $R$ is an embedding of $\str A^+$ into $\str R^+$.  
%
 Since $\cal C^+$ is closed under amalgamation, there
 exists an amalgamation of the diagram $L:\str A^+\to \str L^+$ and $R:\str A^+\to \str R^+$,
 which consists of a structure $\str U^+$ in $\cal C^+$ and two embeddings $L':\str L^+\to \str U^+$ and $R':\str R^+\to \str U^+$. Taking $\Sigma$-reducts, we obtain an amalgamation in $\cal C$
 of $L:\str A\to \str L$ and $R:\str A\to \str R$.
 But the pair of embeddings $L$ and $R$ were supposed to witness failure of amalgamation
 in $\calC$ -- a contradiction proving that $\Cc^+$ cannot be closed under amalgamation.

  Next we show how to prove Claim~\ref{postcounting} and hence Lemma~\ref{lem:technical}.  Call any  embedding in $\mathcal E$ a  \emph{spot}, and
  any restriction of a spot to an $r$-element subset of the domain of
  $\str A$ a \emph{partial spot}. For each coloring $C$ of
  the spots, and each two spots $\pi$ and $\sigma$, define $\pi
  \approx_C \sigma$ if and only if $C(\pi) = C(\sigma)$.  For each
  coloring $D$ of the partial spots, and each two spots $\pi$ and
  $\sigma$, define $\pi \sim_D \sigma$ if and only if $D(\pi|_X) =
  D(\sigma|_X)$ for every $r$-element subset $X$ of the domain of
  $\str A$. Both are equivalence relations on spots.

  \begin{claim} \label{counting}
    There is a coloring $C$ of the spots using two colors, such that
    for all colorings $D$ of the partial spots using $q$ colors, there
    is a pair of spots $\pi$ and $\sigma$ such that $\pi \sim_D
    \sigma$ but $\pi \not\approx_C \sigma$.
  \end{claim}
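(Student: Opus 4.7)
The plan is to prove the claim by a direct counting (union-bound) argument: I will show that the number of 2-colorings $C$ of the spots that are \emph{bad}---in the sense that there exists some $q$-coloring $D$ of the partial spots with $\sim_D$ refining $\approx_C$---is strictly smaller than the total number $2^{|\mathcal{E}|}$ of 2-colorings, so that a non-bad $C$ must exist. The choice $p = \lceil \log_2 q \rceil$ and the hypothesis~\eqref{eq:counting} are precisely engineered to make this count go through.

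The first step is to bound the number of colorings $C$ compatible with a fixed $D$. I would observe that the $\sim_D$-class of a spot $\pi$ is uniquely determined by the tuple $(D(\pi|_X))_X$ indexed by the $r$-element subsets $X$ of the domain of $\str A$. Since there are $k := {|\str A| \choose r}$ such subsets and only $q$ colors, the partition of $\mathcal{E}$ induced by $\sim_D$ has at most $q^k$ classes. A coloring $C$ is compatible with $D$---meaning $\sim_D$ refines $\approx_C$---exactly when $C$ is constant on every $\sim_D$-class, hence there are at most $2^{q^k}$ such~$C$.

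The second step is the union bound over $D$. The total number of colorings $D : \mathcal{E}_r \to \{1,\ldots,q\}$ is at most $q^{|\mathcal{E}_r|}$, so the number of bad $C$ is at most $q^{|\mathcal{E}_r|} \cdot 2^{q^k} \le 2^{|\mathcal{E}_r|\log_2 q + q^k} \le 2^{p\,|\mathcal{E}_r| + q^k}$, where the last inequality uses $p \ge \log_2 q$. Hypothesis~\eqref{eq:counting} states exactly that the exponent on the right is strictly less than $|\mathcal{E}|$, so the number of bad $C$'s is strictly less than $2^{|\mathcal{E}|}$, and a good $C$ therefore exists.

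I do not anticipate any real obstacle: the whole point of the choice of the constants $p$ and $q$ and of inequality~\eqref{eq:counting} was to make this pigeonhole argument succeed. The one place that deserves explicit verification is the $q^k$ bound on the number of $\sim_D$-classes, but this comes for free once one notices that such a class is pinned down by the sequence of $D$-colors attached to the $r$-restrictions of any of its representatives.
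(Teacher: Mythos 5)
Your proof is correct, and the constants $p$, $q$ and hypothesis~\eqref{eq:counting} are indeed engineered precisely so that this union-bound calculation closes. The paper packages the same count in the language of Shannon entropy: it picks $C$ uniformly at random, so $h(C) = |\mathcal{E}|$; supposing the claim fails, it extracts from each $C$ a compatible $q$-coloring $D$ of the partial spots together with a string $E \in \{L,R\}^{q^{|\str A| \choose r}}$ recording the $C$-color of each potential $\sim_D$-class, so that $(E,D)$ determines $C$; and it then reaches the contradiction $|\mathcal{E}| = h(C) \le h(E) + h(D) \le q^{|\str A| \choose r} + p\,|\mathcal{E}_r| < |\mathcal{E}|$. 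Your union bound enumerates exactly the same objects --- $q^{|\mathcal{E}_r|}$ choices of $D$ and at most $2^{q^{|\str A| \choose r}}$ colorings $C$ constant on $\sim_D$-classes --- without the entropy machinery: the two routes are just the multiplicative and logarithmic forms of the same count, via $h(X)\le \log_2|\mathrm{range}(X)|$ and subadditivity. Yours is the more elementary of the two; the paper's entropy formulation is a little slicker to write down and tends to scale more gracefully when the random variables have messier dependencies, but here neither buys anything the other does not. The one step you flagged as deserving care, the $q^{|\str A| \choose r}$ bound on the number of $\sim_D$-classes, you justify correctly: a class is pinned down by the tuple of $D$-colors attached to the $r$-element restrictions of any representative.
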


\begin{proof}    For this proof, let $n$ be the cardinality of the domain of $\str
    A$ and assume without loss that the domain of $\str A$ is $[n] =
    \{1,\ldots,n\}$. Let $N$ be the number of spots and let $M$ be the
    number of partial spots. With this notation, condition~\eqref{eq:counting}
    reads as follows:
    \begin{equation}
    N > p \cdot M + q^{n \choose r}. \label{eqn:ineq}
    \end{equation}
    Color the spots independently at random with either $L$ or $R$,
    each with probability~$1/2$. Let $C$ be the random variable
    describing this process. In particular, $C$ is a random variable
    taking as values strings of length $N$ over alphabet $\{L,R\}$,
    each with the same probability. Thus the binary entropy $h(C)$ of
    the random variable $C$ is equal to $N$.

    Suppose for contradiction that the opposite of what the claim
    states holds. Then there is a random variable $D$ taking as values
    colorings of the partial spots using $q$ colors such that the
    inclusion $\sim_D \;\subseteq\; \approx_C$ holds with
    probability~$1$. The relation $\sim_D$ has at most $q^{n \choose
      r}$ equivalence classes; for each fixed spot $\pi$, there are at
    most $q$ choices of colors for each of the $n \choose r$
    restrictions $\pi|_X$ to $r$-element subsets $X \subseteq [n]$,
    and any two spots sharing these choices are equivalent. In
    particular, there is a random variable $E$ taking as values
    strings of length $q^{n \choose r}$ over alphabet $\{L,R\}$ such
    that $E$ and $D$ determine $C$. That is, $h(C \;|\; E,D) = 0$, or
    equivalently,
    $$
    h(C,E,D) = h(E,D).
    $$
    We will show that this is impossible by proving that
    $$
    h(E,D) < N = h(C) \leq h(C,E,D).
    $$
    Indeed, $D$ is determined by $n \choose r$ random variables $\{
    D_X : X \subseteq [n], |X| = r \}$, where the random variable
    $D_X$ takes as values the colorings of the restrictions of the
    spots to the subset~$X$.  If $M_X$ denotes the number of such
    restrictions, the random variable $D_X$ takes values in
    $[q]^{M_X}$, and therefore
    $$
    h(D_X) \leq \log(q^{M_X}) = \log_2(q) \cdot M_X \leq p \cdot M_X.
    $$
    Noting that $M$ is the sum of $M_X$ as $X$ ranges over all
    $r$-element subsets of $[n]$, it follows that
    $$
    h(D) \leq \sum_{X} h(D_X) \leq \sum_{X} p \cdot M_X = p \cdot
   \sum_{X} M_X = p \cdot M.
    $$
    Moreover $h(E) \leq q^{n \choose r}$ since $E$ takes as values
    strings of length $q^{n \choose r}$ over alphabet $\{L,R\}$. Hence
    $$
    h(E,D) \leq h(E) + h(D) \leq q^{n \choose r} + p \cdot M.
    $$
    However~\eqref{eqn:ineq} states that this quantity is strictly
    smaller than $N$, as required.
\end{proof}

Finally we use Claim~\ref{counting} to prove Claim~\ref{postcounting}.
Let $C$ be the coloring of Claim~\ref{counting} with the two colors
interpreted as the embeddings $L : \str A \rightarrow \str L$
and $R : \str A \rightarrow \str R$. For each expansion $\str J^+$ of $\str
J^C$, let $D$ be the coloring of partial spots defined as follows: for
each spot $\pi$ and each $r$-element subset $X = \{i_1 < \ldots <
i_r\}$ of the domain of $\str A$, let $D(\pi|_X)$ be the atomic type
of $(\pi(i_1),\ldots,\pi(i_r))$ in $\str J^+$. This is a coloring of
partial spots using at most $q$ colors. By Claim~\ref{counting}, there
is a pair of spots $\pi$ and $\sigma$ such that $\pi \sim_D \sigma$
but $\pi \not\approx_C \sigma$. From $\pi \sim_D \sigma$ and the fact
that $r$ is at least as large as the maximum arity of any new predicate
in $\Sigma^+$, it follows that the pullbacks $\hat\pi^*(\str J^+)$ and $\hat\sigma^*(\str J^+)$ are equal. On the other hand,
from $\pi \not\approx_C \sigma$ we get $C(\pi) \not= C(\sigma)$ by
definition. This proves Claim~\ref{postcounting} and Lemma~\ref{lem:technical}.
\end{proof}

\section{Classes of consistent structures} \label{sec:consistent}

In this section we work out the first of our two examples of
non-homogenizable classes. We start by defining some basic notions
from the theory of constraint satisfaction problems as described, for
example, in Chapter~6 of the monograph \cite{Graedeletalbook}. Recall
that, for a structure $\str T$, we write $\csp(\str T)$ for the class
of all finite structures $\str I$ over the same signature as $\str T$
for which there is a homomorphism from $\str I$ to $\str T$.  The
$\str I$'s are called instances, the $\str T$'s are called templates.

\subsection{Local consistency}

Let $\Sigma$ be a relational signature, let $\str A$ and $\str B$ be
$\Sigma$-structures, and let $k$ and $l$ be integers such that $1\le
k\le l$. A $(k,l)$-consistent family on $\str A$ and $\str B$ is a
non-empty family $\cal F$ of partial homomorphisms from $\str A$ to
$\str B$, such that the following three conditions hold for each $f$
in $\cal F$:
\begin{enumerate} \itemsep=0pt
  \item $|\dom f|\le l$,
  \item if $X$ is a subset of $A$, then $f|_X$ is in $\cal F$,
\item if $|\dom f|\le k$ and $X$ is a subset of $A$ such that $\dom
  f\subset X$ and $|X|\le l$, then there exists $g$ in $\cal F$ such
  that $\dom g=X$ and $f\subset g$.
\end{enumerate}  
If there is a $(k,l)$-consistent family on $\str A$ and $\str B$, then
we say that $\str A$ is $(k,l)$-consistent with respect to $\str B$.
Note for later use that the class of structures that are
$(k,l)$-consistent with respect to $\str B$ is closed under inverse
homomorphisms: if there is a homomorphism from $\str A'$ to $\str A$,
and $\str A$ is $(k,l)$-consistent with respect to $\str B$, then
$\str A'$ is also $(k,l)$-consistent with respect to $\str B$. To see
this, it suffices to compose the homomorphism from $\str A'$ to $\str
A$ with each partial homomorphism in the $(k,l)$-consistent family for
$\str A$ to get a $(k,l)$-consistent family for $\str A'$.

We describe the special case of $(2,3)$-consistency in terms of a
pebble game.
The game is played between spoiler and duplicator, each having three
pebbles, numbered $1$, $2$ and $3$.  Spoiler can place his pebbles on
the nodes of $\str A$, while duplicator can place his pebbles on the
nodes of $\str B$. They can also keep the pebbles in their pockets, in
which they have all pebbles at the beginning of the game.  The game
proceeds in rounds as follows.
In each round, spoiler places some of the pebbles from his pocket on
the nodes of $\str A$ and duplicator replies by placing his
corresponding pebbles on the nodes of $\str B$.  If the partial
mapping defined by the pebble placement is not a partial homomorphism
from $\str A$ to $\str B$, then duplicator loses.  Otherwise, spoiler
puts back some of the pebbles into his pocket, and duplicator removes
the corresponding pebbles, and the game continues to the next round.
It is not hard to see that $\str A$ is $(2,3)$-consistent with respect
to $\str B$ if and only if duplicator can avoid losing forever.


\subsection{Systems of linear equations over $\mathbb{F}_2$}

We define a finite template $\str T_2$ that can be used to represent
the solvability of systems of linear equations over the 2-element
field.  Let us note that our definition of the template $\str T_2$
will not be the standard one as it can be found, for example, in the
original Feder-Vardi paper \cite{FederVardi1998}. The main difference
is that we want to have a signature of smallest possible arity, in
this case two. We achieve this by letting $\str T_2$ be the natural
encoding of the standard template as its \emph{incidence} structure.
Concretely, $\str T_2$ is defined as follows. Its domain is $D\cup R$,
where
  \begin{align*}
  D &= \{0,1\}, \\ 
  R &= \set{(x,y,z) \in D^3 : x+y+z=0 \;\text{mod}\; 2}.
  \end{align*}
  The elements of $D$ are called values, and those of $R$ are called
  triples.  The signature $\Sigma$ includes three partial functions
  $\pi_1$, $\pi_2$, and $\pi_3$ that map triples in $R$ to values in
  $D$, and four unary relations \emph{value}, \emph{triple}, $C_0$ and
  $C_1$.  Formally, in order to have a relational structure, $\str
  T_2$ has binary relations that correspond to the graphs of the
  partial functions $\pi_1$, $\pi_2$ and $\pi_3$.  The interpretations
  of the symbols in $\str T_2$ are as follows:
  \begin{enumerate} \itemsep=0pt
  \item $\pi_1$, $\pi_2$ and $\pi_3$ map
    $(x,y,z)$ in $R$ to $x$, $y$ and $z$, respectively,
  \item \emph{value} holds of all elements in $D$, 
  \item \emph{triple} holds of all elements in $R$, 
  \item $C_0$ holds of $0$ in $D$, and
  \item $C_1$ holds of $1$ in $D$.
  \end{enumerate}
The purpose of \emph{triple} is to encode equations of the type
$x+y+z=0 \text{ mod } 2$, and the purposes of $C_0$ and $C_1$ are to
encode equations of the type $x=0$ and $x=1$, respectively.  Note that
even though the language does not allow writing more complicated
equations, such as $x + y + z = 1 \text{ mod } 2$ or $w + x
+ y + z = 0 \text{ mod } 2$, such equations can be simulated in
the language of $\str T_2$ with the help of auxiliary variables.

\begin{theorem}\label{thm:cons}
  The class of all finite structures that are $(2,3)$-consistent with
  respect to $\str T_2$ is not homogenizable.
\end{theorem}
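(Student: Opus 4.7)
The plan is to apply Theorem~\ref{thm:necessary} to the class $\Cc$ of finite structures $(2,3)$-consistent with respect to $\str T_2$. For that, I would exhibit, for every natural number $n$, a confusing diagram $L : \str A_n \to \str L_n$, $R : \str A_n \to \str R_n$ for $\Cc$ of order at least $n$.

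For the failure-of-amalgamation part, I would use a propagation-chain construction. Let $\str F_n$ be the structure over the signature of $\str T_2$ with value-type elements $x_0,\ldots,x_n,y_1,\ldots,y_n$, triple-type elements $t_1,\ldots,t_n$ where $t_i$ has $\pi_1=x_{i-1}$, $\pi_2=y_i$, $\pi_3=x_i$, and unary labels $C_0(x_0),C_0(y_1),\ldots,C_0(y_n),C_1(x_n)$. This encodes the unsatisfiable $\mathbb{F}_2$-system ``$x_0=0$, $x_i=x_{i-1}$ (via $y_i=0$) for each $i$, and $x_n=1$''; a direct induction---propagating the forced value $x_0\mapsto 0$ through one triple at a time, as in the two-triple example worked out in the preliminaries---shows $\str F_n\notin\Cc$. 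Set $\str L_n:=\str F_n$ without the label $C_1(x_n)$, $\str R_n:=\str F_n$ without $C_0(x_0)$, and $\str A_n:=\str L_n\cap\str R_n$. Each of $\str L_n,\str R_n$ is satisfiable by a constant assignment (all zeros for $\str L_n$; $x_i\mapsto 1$, $y_i\mapsto 0$, $t_i\mapsto(1,0,1)$ for $\str R_n$), hence lies in $\Cc$. Since $\Cc$ is closed under taking homomorphic preimages and $\str F_n$ is the free amalgam of $\str L_n$ and $\str R_n$ along $\str A_n$, no amalgam lies in $\Cc$. The order is $|A_n|=3n+1$, which is unbounded as $n$ grows.

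The hard part is verifying the confusing property: for every $m$ and every coloring $C:\Ee_{\str A_n,m}\to\{L,R\}$, the blow-up $\str J^C:=(\str A_n\otimes m)^C$ belongs to $\Cc$. The subtlety is that the blow-up $\str A_n\otimes m$ contains, between every two adjacent ``layers'' of copies, all possible pairs as $\pi$-related, so the propagation witnessing $(2,3)$-inconsistency of $\str F_n$ apparently survives in $\str J^C$ whenever $C$ places any $C_0$-label on some $(x_0,*)$-copy and any $C_1$-label on some $(x_n,*)$-copy. Overcoming this requires refining the split $(\str A_n,\str L_n,\str R_n)$ so that the pinning constraints on the $y_i$'s live inside $\str L_n$ and $\str R_n$ as fresh gadgets rather than in $\str A_n$ itself, ensuring that the $C_0$-labels on the auxiliary variables in $\str J^C$ are bound to specific colored embeddings instead of being distributed uniformly across all copies by the blow-up. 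Granted this refinement, I would construct a $(2,3)$-consistent family $\mathcal{F}$ of partial homomorphisms from $\str J^C$ into $\str T_2$ as a duplicator strategy in the pebble game, exploiting the monochromaticity of each $\pi_f$ under $C$ to show that no three pebbles in $\str J^C$ can simultaneously witness both an $L$-chain endpoint and an $R$-chain endpoint bound together. A case analysis on the types (value vs. triple) and labels of the three pebbled elements, together with a combinatorial argument that fresh chains from distinct colored embeddings do not combine across the $L$/$R$ divide, should complete the verification.

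Granted confusing diagrams of every order $n$, Theorem~\ref{thm:necessary} immediately yields that $\Cc$ is not homogenizable.
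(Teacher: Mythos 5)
Your framework (apply Theorem~\ref{thm:necessary}, exhibit confusing diagrams of unbounded order) and your failure-of-amalgamation construction are fine, but the confusing property genuinely fails for any chain-based $\str A_n$, and the refinement you sketch does not rescue it. The core problem is not where the $y_i$-pins live but that a chain encodes a \emph{deterministic pointwise} propagation $x_0 \mapsto x_1 \mapsto \cdots \mapsto x_n$. Even if you strip $\str A_n$ down to just the $n{+}1$ isolated value-points $x_0,\dots,x_n$ and push all triples, $y_i$'s and their $C_0$-labels into $\str L_n,\str R_n$ as fresh gadgets, the following colorings break the confusion property: pick embeddings $\pi_f,\pi_g\in\Ee_{\str A_n,m}$ with $f(x_0)=g(x_0)$ (or with $f(x_n)=g(x_n)$) and set $C(\pi_f)=L$, $C(\pi_g)=R$. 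In $\str J^C$ the fresh $L$-chain attached along $\pi_f$ pins $(x_0,f(x_0))$ to $0$ and forces $(x_i,f(x_i))=0$ down the whole chain; it does this entirely on its own fresh triples and $y$'s, so spoiler can track the propagation with two pebbles plus slides. Meanwhile the fresh $R$-chain attached along $\pi_g$ pins $(x_n,g(x_n))$ to $1$ and propagates back to $(x_0,g(x_0))=(x_0,f(x_0))$, forcing value $1$ on that same element. Spoiler wins; $\str J^C\notin\Cc$; the diagram is not confusing. Since the definition quantifies over \emph{all} colorings $C$, one bad coloring suffices, and such colorings exist whenever $m\ge 2$ and $n\ge 1$.

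The paper avoids this by replacing the chain with a complete binary \emph{tree}: $\str A$ is the $n$ leaves (isolated value-points with no relations), $\str L = m_0(\str I)$ and $\str R = m_1(\str I)$ are the whole tree with the \emph{root} labeled $C_0$ or $C_1$. The constraint imposed by a glued tree is the \emph{global parity of all $n$ of its leaves}, not a local chain from one endpoint to another. Crucially, two differently-colored embeddings $\pi_f,\pi_g$ with $f\ne g$ disagree on at least one leaf, so the two trees' leaf-sets differ and their parities can both be satisfied by setting that one leaf differently in the two trees. Detecting a global parity constraint over $n$ leaves is beyond the power of $3$ pebbles (informally, you would need to remember a running sum over $n$ unbounded branches), and the paper's Lemma~\ref{lem:cons} makes this precise via a duplicator strategy built on congruence classes and ``nice'' partial assignments. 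So the decisive idea your proposal is missing is to use a branching structure whose unsatisfiability is a \emph{global parity} of a large antichain of isolated shared points, rather than a linear propagation chain whose endpoints can be pinned on both sides by two different colored copies.
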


  \newcommand{\leaves}{\text{leaves}} 
  \newcommand{\lft}{\mathit{left}}
  \newcommand{\rgt}{\mathit{right}}
  \newcommand{\ftr}{\mathit{father}}

\begin{proof}
  In the following, we fix the template $\str T = \str T_2$, and when
  we refer to consistency, we mean $(2,3)$-consistency with respect to
  $\str T$. Finite structures on the signature of $\str T$ are called
  instances. Homomorphisms $f : \str I \rightarrow \str T$ from an
  instance $\str I$ to $\str T$ are called solutions. By $\calC$
  denote the class of consistent instances. Observe that, as noted
  earlier, the class of consistent instances is closed under inverse
  homomorphisms.

  The plan is to apply Theorem~\ref{thm:necessary} to $\calC$, and for
  that we need to find a confusing diagram $L:\str A\to\str L,R:\str A\to \str R$ with arbitrarily large $\str A$.

  Let $n \geq 8$ be an exact power of two\oldnote{AA: I decided to go
    with complete binary trees to simplify.}. Let $t$ be a rooted,
  ordered tree with $n$ leaves at depth $\log_2(n)$; in particular, no
  node at depth~$2$ is a leaf, and no node at depth $\log_2(n) - 1$ is
  a root.  Let $\str I$ be the instance obtained from $t$, with
  elements of two types: \emph{nodes}, which correspond to the nodes
  of $t$, and \emph{triples}, which correspond to triples
  $(v,v_0,v_1)$, where $v$ is an internal node in $t$, and $v_0$ and
  $v_1$ are its left and right sons, respectively. Nodes are labeled
  by the unary predicate \emph{value} and triples are labeled by the
  unary predicate \emph{triple}.  We say that the triple $(v,v_0,v_1)$
  is the triple \emph{below} node $v$, and is \emph{adjacent} to, or
  \emph{contains} $v$, $v_0$, and $v_1$. For each such triple, we
  declare:
\begin{align*}
& \ftr(v, v_0,v_1) = \pi_1(v,v_0,v_1) = v,\\
& \lft(v, v_0,v_1) = \pi_2(v,v_0,v_1) = v_0,\\
& \rgt(v, v_0,v_1) = \pi_3(v,v_0,v_1) = v_1.
\end{align*}
We call a structure of this kind simply a tree. Since we will work
with $\Sigma$-structures that are made of trees, for the sake of
intuition from now on we use the names $\ftr$, $\lft$, and $\rgt$ in
place of $\pi_1$, $\pi_2$, and $\pi_3$.  If $i$ is a value in $D$,
then the \emph{$i$-marking} of $\str I$ is the $\Sigma$-structure
$m_i(\str I)$ obtained from $\str I$ by marking the root by the
predicate $C_i$. Observe that in any solution $v:m_i(\str I)\to \str
T$ of $m_i(\str I)$, the sum of the values of the leaves is equal to
$i$ modulo~$2$. Conversely, any mapping from the leaves of $\str I$ to
$\str T$ such that the sum of the values of the leaves is equal to $i$
modulo~$2$ extends uniquely to a solution $v:m_i(\str I)\to \str T$.

The structures $\str L$ and $\str R$ are the markings $m_0(\str I)$
and $m_1(\str I)$ of the tree $\str I$, respectively.  The structure
$\str A$ is the substructure of $\str I$ induced by the leaves of the
tree. Note that $\str A$ consists of $n$ isolated points, labeled by
the unary relation \emph{value}. The unary relations \emph{triple},
$C_0$ and $C_1$, as well as the binary relations $\pi_1$, $\pi_2$, and
$\pi_3$, are empty in $\str A$. Note that $\str L$ and $\str R$ share
$\str A$ as an induced substructure.  Let $L : \str A \rightarrow \str
L$ and $R : \str A \rightarrow \str R$ be the corresponding embeddings.

  \begin{lemma}\label{lem:incons}
    The free amalgam of $\str L$ and $\str R$ through $L$ and $R$ is
    inconsistent.
  \end{lemma}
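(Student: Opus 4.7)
My plan is to assume, for contradiction, that $\calF$ is a $(2,3)$-consistent family on the free amalgam $\str F := \str L \cup_{\str A} \str R$ and $\str T$, and to derive a contradiction by tracing the propagation of pair constraints along the tree. The only elements shared between $\str L$ and $\str R$ are the leaves, so the ternary equations should force every corresponding pair of internal nodes to receive equal values; this conflicts with the distinct markings $C_0$ and $C_1$ at the two roots.

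For every node $v$ of the underlying binary tree, let $v^L \in \str L$ and $v^R \in \str R$ denote its two copies (with $v^L = v^R$ whenever $v$ is a leaf), and for internal $v$ let $t^L = (v^L, v_0^L, v_1^L)$ and $t^R = (v^R, v_0^R, v_1^R)$ denote the two copies of the triple directly below~$v$. I will prove, by induction on the height~$h$ of~$v$, that
\[
Q(v^L, v^R) \;:=\; \{(a,b) \in \{0,1\}^2 : \{v^L \mapsto a,\, v^R \mapsto b\} \in \calF\} \;\subseteq\; \{(c,c) : c \in \{0,1\}\}.
\]
The base case $h = 0$ is trivial since $v^L = v^R$. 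For the inductive step, I first argue that $Q(t^L, t^R) \subseteq \{(X, X) : X \in R\}$: starting from any pair $\{t^L \mapsto X,\, t^R \mapsto X'\} \in \calF$, condition~3 extends it to $\{t^L \mapsto X,\, t^R \mapsto X',\, v_0^L \mapsto \pi_2(X)\}$; the value of $v_0^L$ is forced because $\pi_2(t^L, v_0^L)$ must be preserved. Restricting to $\{t^R, v_0^L\}$ (condition~2) and applying condition~3 again yields $\{t^R \mapsto X',\, v_0^L \mapsto \pi_2(X),\, v_0^R \mapsto \pi_2(X')\}$. The further restriction to $\{v_0^L, v_0^R\}$ lies in $Q(v_0^L, v_0^R)$, which by the induction hypothesis is on the diagonal, so $\pi_2(X) = \pi_2(X')$. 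The symmetric argument via $v_1$ gives $\pi_3(X) = \pi_3(X')$; since both $X$ and $X'$ are in $R$ (their three coordinates sum to zero modulo~$2$), we also get $\pi_1(X) = \pi_1(X')$, hence $X = X'$. A second chain upgrades this to $Q(v^L, v^R) \subseteq \{(c,c)\}$: a hypothetical pair $\{v^L \mapsto a,\, v^R \mapsto b\}$ extends to include $t^L$ with $\pi_1(t^L) = a$, and restricting and extending once more to include $t^R$ forces $t^R = t^L$ by the diagonal on $Q(t^L, t^R)$, which in turn forces $b = \pi_1(t^R) = a$.

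Applying the claim at the root gives $Q(r_0, r_1) \subseteq \{(c, c)\}$. But the markings $C_0, C_1$ restrict admissible singletons to $r_0 \mapsto 0$ and $r_1 \mapsto 1$, so also $Q(r_0, r_1) \subseteq \{(0, 1)\}$; the intersection is $Q(r_0, r_1) = \emptyset$. This contradicts condition~3 applied to the empty partial map, which extends to $\{r_0 \mapsto 0\}$ and then should extend further to some $\{r_0 \mapsto 0,\, r_1 \mapsto b\} \in \calF$---no such $b$ exists. Hence $\calF$ cannot exist, and the free amalgam is inconsistent.

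The step I expect to be most delicate is the inductive chain itself. Each height of the tree requires several applications of condition~3 interleaved with restriction (condition~2) to chain inferences through pairs of triples and values. Because the axioms of a $(2,3)$-consistent family are phrased as existential extension properties rather than as a closure operation, I will need to keep careful track, at each intermediate point of each chain, of which memberships in $\calF$ have actually been established by the induction hypothesis at lower heights.
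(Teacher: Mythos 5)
Your argument is correct, and it genuinely takes a different route from the paper's. The paper exhibits a winning strategy for spoiler in the $(2,3)$-pebble game (augmented with ``slide'' moves, simulated using the third pebble): two pebbles start on the two roots, where duplicator is forced to answer with values $0$ and $1$, and the parity constraint $x+y+z=0$ propagates the disagreement \emph{downward} --- if the fathers disagree then at least one pair of sons must disagree --- until the two pebbles land on a common leaf. You instead work directly with the axioms of a $(2,3)$-consistent family and run the propagation \emph{upward}: by induction from the leaves (where the two copies coincide) you show that the pairs set $Q(v^L,v^R)$ lies on the diagonal, which clashes with the root markings $C_0,C_1$. The two arguments are essentially dual; the paper's is adaptive and constructive (a strategy), yours is static and proceeds by contradiction with a global inductive invariant. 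The paper's version reads more smoothly because the slide-move formalism hides the bookkeeping; yours is more elementary in that it never leaves the definition of $(2,3)$-consistency, at the cost of exactly the careful chain-tracking you flag at the end.

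One small imprecision worth fixing: your intermediate chain, e.g.\ extending $\{t^R \mapsto X',\, v_0^L \mapsto \pi_2(X)\}$ to the set $\{t^R, v_0^L, v_0^R\}$ and then invoking the induction hypothesis on $Q(v_0^L,v_0^R)$, silently assumes $v_0^L \neq v_0^R$. When $v_0$ is a leaf these are the same element, the set has size $2$, and condition~3 adds nothing. The conclusion $\pi_2(X) = \pi_2(X')$ still holds in that case, but for a slightly different reason: the restriction $\{t^R \mapsto X',\, v_0^L \mapsto \pi_2(X)\}$ is already in $\calF$, hence must be a partial homomorphism, and since $(t^R, v_0^R) = (t^R, v_0^L)$ is a $\pi_2$-edge of the amalgam, its image $(X', \pi_2(X))$ must lie in $\pi_2^{\str T}$, forcing $\pi_2(X') = \pi_2(X)$. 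So the edge case is harmless, but you should say explicitly how the argument degenerates one level above the leaves (equivalently, one could absorb it by extending $\{t^L \mapsto X, t^R \mapsto X'\}$ directly with $v_0^L$ and observing that both $\pi_2$-edges into $v_0^L = v_0^R$ constrain its image at once). With that noted, the proof stands.
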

\begin{proof}
  When spoiler has only two pebbles on the board, we allow him to
  perform a move we call a \emph{slide}, in which he moves one pebble
  from a node $v$ to a triple adjacent to it, or from a triple to a
  node belonging to this triple.  Duplicator has to respond
  accordingly: if spoiler slides a pebble from a node $v$ to a triple
  $t$ containing $v$ on the $i$-th coordinate, then duplicator must
  move his corresponding pebble from a value $v$ in $D$ to a triple in
  $R$ containing $v$ on the $i$-th coordinate. Symmetrically, in the
  case when spoiler slides his pebble from a triple to the node in the
  $i$-th coordinate, duplicator must move his corresponding pebble
  from the corresponding triple to the value on its $i$-th coordinate.
  The slide moves can be simulated in the original game, using a third
  pebble.
  
  Denote the two (overlapping) trees $\str I_L$ and $\str I_R$,
  respectively; they have common leaves in the free amalgam $\str
  L\cup_{\str A}\str R$.  Here is the strategy for spoiler; it
  consists of several steps.  In the beginning of the $n$-th step,
  spoiler has two pebbles placed on corresponding nodes $a$ and $b$ of
  $\str I_L$ and $\str I_R$, at depth $n-1$ of the tree. In
  particular, in the beginning of the first step, two pebbles are
  placed on the roots of $\str I_L$ and $\str I_R$, respectively.  For
  a node $v$ on which spoiler has his pebble, denote by $r(v)$ the
  value of the corresponding pebble of duplicator.  The invariant is
  that $r(a)\neq r(b)$. This invariant is clearly satisfied in the
  beginning of the first step, since $\str I_L$ has its root labeled
  with $C_0$ and $\str I_R$ has its root labeled with~$C_1$.
    
  In the $n$-th step, spoiler slides his pebble from node $a$ to the
  triple $a'$ below $a$ in $\str I_L$, and then slides his pebble from
  node $b$ to the triple $b'$ below $b$ in $\str I_R$.  Duplicator's
  responses have to satisfy $r(\ftr(a'))=r(a)$ and $r(\ftr(b'))=r(b)$.
  In particular, $r(\ftr(a'))\neq r(\ftr(b'))$, by the invariant. It
  follows that $\lft(r(a'))+\rgt(r(a'))\neq\lft(r(b'))+\rgt(r(b'))$,
  so either $\lft(r(a'))\neq \lft(r(b'))$ or $\rgt(r(a'))\neq
  \rgt(r(b'))$ (or both). Since the cases are symmetric, suppose
  without loss of generality that the first case occurs. Then spoiler
  slides the pebble from $a'$ to $\lft(a')$ and then slides the pebble
  from $b'$ to $\rgt(b')$, and continues the game from these two nodes
  playing the role of $a$ and $b$.  The invariant is satisfied.
    
  Since in each step the depth of $a$ increases by $1$, at some point,
  $a$ must be a leaf of $\str I_L$, and $b$ is the corresponding leaf
  in $\str I_R$.  But then $a$ and $b$ are the same element in $\str
  L\cup_{\str A}\str R$, and by the invariant $r(a)\neq r(b)$. In
  other words, spoiler has two pebbles placed at the same node of
  $\str L\cup_{\str A} \str R$, but the corresponding pebbles of
  duplicator are not placed on the same element of $\str T$. So
  duplicator loses.
\end{proof}

\begin{lemma}\label{lem:incons1}
Every amalgam of $\str L$ and $\str R$ through $L$ and $R$
is inconsistent.
\end{lemma}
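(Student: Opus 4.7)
The key observation is that the free amalgam is the ``most identifying-free'' amalgam: it maps homomorphically into every other amalgam. Combined with the fact, already recorded in the previous section, that the class of $(2,3)$-consistent instances with respect to $\str T$ is closed under inverse homomorphisms, this immediately reduces Lemma~\ref{lem:incons1} to Lemma~\ref{lem:incons}.

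First, I would fix an arbitrary amalgam $\str D$ of $\str L$ and $\str R$ through $L$ and $R$, given by embeddings $f' : \str L \to \str D$ and $g' : \str R \to \str D$ with $f' \circ L = g' \circ R$. Then I define a map $h : \str L \cup_{\str A} \str R \to \str D$ by $h([x]) = f'(x)$ on elements coming from $\str L$, and $h([y]) = g'(y)$ on elements coming from $\str R$. The commutativity condition $f' \circ L = g' \circ R$ makes $h$ well-defined on the equivalence classes of the relation identifying $L(a)$ with $R(a)$ for $a$ in $A$. Since by definition the relations in the free amalgam are the union of the relations of $\str L$ and $\str R$ (transported along the identification), and since both $f'$ and $g'$ are embeddings (in particular homomorphisms), each relational tuple of $\str L \cup_{\str A} \str R$ is sent by $h$ to a relational tuple of $\str D$. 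Thus $h$ is a homomorphism.

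Now suppose for contradiction that $\str D$ is $(2,3)$-consistent with respect to $\str T$, that is, $\str D \in \calC$. Since $\calC$ is closed under inverse homomorphisms, the homomorphism $h : \str L \cup_{\str A} \str R \to \str D$ forces the free amalgam to satisfy $\str L \cup_{\str A} \str R \in \calC$ as well. But this directly contradicts Lemma~\ref{lem:incons}, which asserts that the free amalgam is inconsistent. Hence no amalgam of $\str L$ and $\str R$ through $L$ and $R$ lies in $\calC$.

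I do not foresee any real obstacle in this argument; the only point that needs care is the verification that $h$ is well-defined and preserves relations, which is routine once one writes out the quotient description of the free amalgam as the disjoint union of $\str L$ and $\str R$ modulo the equivalence identifying $L(a)$ with $R(a)$, with relations given by the union of the interpretations in $\str L$ and $\str R$.
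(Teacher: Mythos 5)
Your proof is correct and takes exactly the same route as the paper's: use that the free amalgam maps homomorphically into every amalgam, and combine this with Lemma~\ref{lem:incons} and the closure of $\calC$ under inverse homomorphisms. You simply spell out the well-definedness and relation-preservation of the canonical map in more detail than the paper does.
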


\begin{proof}
This follows at once from the previous lemma and the fact that $\cal
C$ is closed under inverse homomorphisms. Indeed, the free amalgam
$\str L \cup_{\str A} \str R$ through $L$ and $R$ maps homomorphically
to any amalgam of $\str L$ and $\str R$ through $L$ and $R$.
\end{proof}


Let $m$ be a natural number, and let $\str J=\str A\otimes m$ and $\Ee=\Ee_{\str A,m}$.
%
%

  \begin{lemma}\label{lem:cons} 
    For every coloring $C: \mathcal{E} \to\set{L,R}$,
  the structure $\str J^C$ is consistent.
  \end{lemma}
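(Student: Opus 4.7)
The plan is to exhibit a $(2,3)$-consistent family $\mathcal{F}$ of partial homomorphisms from $\str J^C$ to $\str T$, which by the pebble-game equivalence recalled earlier is equivalent to $(2,3)$-consistency of $\str J^C$ with respect to $\str T$. The key enabling observation is that every tree $T$ in $\str J^C$ is a marked copy of $\str L$ or $\str R$ with $n \ge 8$ leaves, hence admits $2^{n-1}$ solutions for its marking, giving ample flexibility to satisfy further constraints after fixing a bounded number of elements within $T$.

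I would define $\mathcal{F}$ to consist of all partial homomorphisms $f : X \to \str T$ with $|X| \le 3$ satisfying two conditions: (i) for every tree $T$ of $\str J^C$ with $X \cap T \ne \emptyset$, the restriction $f|_{X \cap T}$ extends to a full solution of $T$; and (ii) for every pair of triples $t_1, t_2 \in X$ (possibly in different trees) that share a leaf neighbor $z$ in $\str J^C$, with $z = \pi_i(t_1) = \pi_j(t_2)$, the projections agree, i.e.\ $\pi_i(f(t_1)) = \pi_j(f(t_2))$. The family contains the empty map and is closed under restrictions, both immediate from the definition.

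The main task is to verify the extension property: given $f \in \mathcal{F}$ with $|X| \le 2$ and a new element $z \notin X$, construct $g \in \mathcal{F}$ of size at most three extending $f$. I would proceed by cases on the type of $z$. If $z$ is a leaf, condition (ii) on $f$ guarantees that all $\pi_i$-constraints on $g(z)$ imposed by triples in $X$ adjacent to $z$ coincide, so $g(z)$ may be set to that common value, or chosen freely if no such triple exists. If $z$ is an internal node or a triple of a tree $T_0$, the plan is to choose a solution $s_{T_0}$ of $T_0$ that extends $f|_{X \cap T_0}$, which exists by condition (i), and set $g(z) = s_{T_0}(z)$.

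The hard part, and the technical heart of the argument, will be the sub-case where $z$ is a triple: the chosen $s_{T_0}$ must simultaneously extend $f|_{X \cap T_0}$ \emph{and} assign, at each leaf of $z$ that is shared with another tree containing a triple of $X$, the value dictated by that triple via its $\pi$-projection. Condition (ii) on $f$ ensures that the constraints arising from distinct triples of $X$ on each such shared leaf are pairwise compatible, and the abundance of solutions of $T_0$---fixing a bounded number of additional leaf values on a tree with $n \ge 8$ leaves still leaves a nonempty family of candidate solutions---makes this simultaneous choice feasible. Verifying that the resulting $g$ still satisfies (i), trivially for trees other than $T_0$, and (ii), whose new pairs $(z,t)$ with $t \in X$ a triple are handled precisely by the choice of $s_{T_0}$, then completes the argument.
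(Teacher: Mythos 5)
Your proposal is correct in spirit but follows a genuinely different route from the paper. The paper certifies $(2,3)$-consistency via a pebble-game argument: it passes to a \emph{modified} game in which pebbles are placed only on triples, introduces a \emph{congruence} relation identifying nodes across different glued trees (same node of $\str I$, same set of leaves below), and maintains throughout the game a single \emph{nice function} defined on the set $X'$ of all nodes congruent to components of pebbled triples together with all the roots; the invariant then ties duplicator's pebbles to this nice function. Your proof instead stays at the level of the original definition of $(2,3)$-consistency, exhibiting directly a consistent family $\mathcal F$ of partial homomorphisms of domain size at most three, and replaces the paper's global congruence-class bookkeeping by the pair of local conditions (i) per-tree extendability and (ii) cross-tree agreement at implied leaf values of pairs of pebbled triples. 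This is a cleaner entry point, since it dispenses with both the modified game and the congruence machinery, and the pivotal facts it rests on --- that distinct gadget trees overlap only at leaves, and that with $n\ge 8$ leaves any assignment of a bounded number of leaves together with a bounded number of internal triples still extends to a full solution --- are the same facts that power the paper's case analysis.

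The places where the sketch would need to be fleshed out are exactly where the paper spends most of its effort, and you should be aware that they are not entirely trivial. First, in the leaf case, you set $g(z)$ by respecting the $\pi$-constraints from triples of $X$ adjacent to $z$, but you do not re-verify condition (i) for $g$: the new leaf $z$ lies in many trees $T$, some of which may already intersect $X$, and you must argue that $g|_{(X\cup\{z\})\cap T}$ still extends for each such $T$; this follows from the $n\ge 8$ slack, but only after checking that fixing the at most three values already present plus one extra leaf still leaves room to satisfy the root-marking parity. Second, when $z$ is a triple of $T_0$, the candidate solution $s_{T_0}$ must simultaneously (a) extend $f|_{X\cap T_0}$ and (b) hit the leaf values of $z$ forced by triples in $X$ from \emph{other} trees; you invoke condition (ii) for the pairwise compatibility of (b), which is right, but you also need that (a) and (b) are jointly compatible --- for instance when some $t\in X\cap T_0$ has a leaf child that is also a leaf child of $z$ and is simultaneously shared with a triple $t'\in X$ in another tree, so that both (a) and (b) constrain the same leaf. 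That case is handled because $(t,t')$ is itself a pair covered by (ii), but the argument should say so explicitly. These are precisely the configurations that the paper's three-case claim on the choice of $i,i_0,i_1$ is designed to enumerate, so your route does not avoid the case analysis, it just packages it differently.
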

  
  \begin{proof}
We modify the game, by giving more power to spoiler.
We show that even in this game, duplicator wins.
In the modified game, the pebbles of spoiler can be placed only on
triples of $\str J^C$, and the pebbles of duplicator can be 
placed only on triples of $\str T$. If the pebbles of spoiler are placed on
triples $a_1,\ldots,a_k$, with $k\le 3$, then duplicator must have his
corresponding pebbles placed on triples $t_1,\ldots,t_k$ in $\str T$,
so that the following conditions are satisfied:
\begin{itemize}
\item Whenever $a_i$ is a triple containing a node with unary
  predicate $C_j$ on some coordinate, then the same coordinate of
  $t_i$ is equal to $j$.
\item Whenever $a_i$ and $a_j$ agree on some coordinate, then $t_i$
  and $t_j$ also agree on the same coordinate.
\end{itemize}
We show how spoiler can copy a strategy
which is winning in the original game 
to win in the modified game.

\begin{claim}
  If spoiler has a winning strategy in the original game, then he also
  has a winning strategy in the modified game.
\end{claim}
\begin{proof}
  Suppose that in the original game spoiler places a pebble on a node
  $v$.  We copy this move in the modified game by placing a pebble on
  any triple containing $v$ on some coordinate, say, the $i$-th
  coordinate, and await the response of duplicator.  If in the
  modified game duplicator places his corresponding pebble on a triple
  $t$ in $\str T$, then we pretend that the duplicator in the original
  game places his pebble on the $i$-th coordinate of $t$, and the game
  continues. At some point, duplicator loses in the original
  game. This means that one of two cases occurred in the original
  game:
\begin{itemize} \itemsep=0pt
\item Spoiler has placed a pebble on a node with unary predicate $j$
  and duplicator replied by placing his corresponding pebble on a
  value $j'$ with $j'\neq j$.
\item One pebble of spoiler is placed on a node $v$ and another pebble
  of spoiler is placed on a triple $t$ containing $v$ on the $i$-th
  coordinate, and the corresponding pebbles of duplicator are placed
  on a value $r(t)$ and a triple $r(t)$ that, however, do not satisfy
  the condition that the $i$-th coordinate of $r(v)$ equals $r(t)$.
\end{itemize}
Since duplicator is only copying his strategy from the modified game,
it must be the case that duplicator must have lost as well in the
modified game. In particular, if spoiler wins in the original game,
then he wins in the modified game.
\end{proof}

We show a winning strategy for duplicator in the modified game on
$\str J^C$.  By the claim above, this means that duplicator also has a
winning strategy in the original game.

The arena $\str J^C$ on which spoiler places his pebbles is a union of
trees of the form $\str I$ glued along the leaves. Therefore, it is
meaningful to talk about roots, children (or sons), brothers, and
leaves, and parents in the case of nodes from trees which are not
roots nor leaves (leaves have very many parents). Every triple in
$\str J^C$ is of the form $(v,v_0,v_1)$, where $v$ is an internal node
of some tree, and $v_0$ and $v_1$ are its left and right son.

Call two tree nodes $v$ and $w$ in $\str J^C$ \emph{congruent}, and write
$v\cong w$, if the following conditions hold:
\begin{itemize} \itemsep=0pt
  \item The nodes correspond to the same node in $\str I$,
  \item The leaves below $v$ coincide with the leaves below $w$. 
\end{itemize}
We lift this notion to triples: two triples $(v,v_0,v_1)$ and
$(w,w_0,w_1)$ are congruent, also written $(v,v_0,v_1) \cong
(w,w_0,w_1)$, if $v\cong w$, $v_0\cong w_0$, and $v_1\cong
w_1$. Observe that two distinct roots in $\str J^C$ are not congruent,
since by construction, not all their leaves are identified.

%

During the game, let $a_1,\ldots,a_k$, with $k\le 3$, denote the
triples on which the pebbles of spoiler are placed.  
Let $X$ denote the set of nodes that are congruent to some component
of some pebbled triple, and let $X'$ denote the union of $X$ with the
roots\oldnote{AA: Note the subtle change in the definition of $X$ and
  $X'$; earlier $X$ was the set of triples that are congruent to a
  pebbled triple, and $X'$ were the set of components of triples in
  $X$. I think the new definition is more appropriate since $X'$ is
  now closed under congruence.}.  We say that a function $f:X'\to D$ is
\emph{nice} if it satisfies the following conditions.
\begin{enumerate} \itemsep=0pt
\item For every triple $(x,y,z)$ in $\str J^C$, if $x,y,z\in X'$, then
  $f(x)+f(y)+f(z)=0$.
\item For every root $r$, if $r$ is marked with unary predicate $C_i$,
  then $f(r)=i$.
\item Whenever $x,y\in X'$ are congruent, then $f(x)=f(y)$.
\end{enumerate}
We show that duplicator has a strategy which satisfies the following
invariant at each moment of the game:
\begin{quote}
  There is a nice function $f:X'\to D$ such that for each pebble of
  spoiler occupying a triple $(x,y,z)$, duplicator's corresponding
  pebble occupies the triple $(f(x),f(y),f(z))$.
\end{quote}

At the beginning of the game, the invariant is satisfied: since $X'$
consists only of roots, we can define $f(x)=i$ for a root $x$ with
unary predicate $C_i$, yielding a nice function -- the last condition
of nicety holds since no two distinct roots are congruent.

Suppose that at some moment during the game there is a function $f$
satisfying the above conditions, and spoiler performs a move.  If in
this move he removes a pebble from some triple, then duplicator
responds by removing the corresponding pebble from $\str T$, and it is
easy to see that the restriction of $f$ to the resulting set $X'$
satisfies the above conditions.

Suppose now that spoiler makes his move by placing a new pebble on the
board. In particular, before the move he had $k\le 2$ pebbles on
triples $a_1,\ldots,a_k$, and a new pebble is placed on the triple
$a_{k+1}$, which we denote $c$ for simplicity.  Below, unless
indicated, when we speak about $X$, $X'$, or $f$, we refer to their
values just before spoiler placed the new pebble on $c$.  The case
that $c$ is a triple $(v,v_0,v_1)$ with $v,v_0,v_1\in X'$ is trivial:
duplicator just responds with $(f(v),f(v_0),f(v_1))$. This response is
not loosing thanks to the invariant and the first two conditions of
the nicety of $f$.  Moreover, the values of $X$ and $X'$ after
duplicator's response are unmodified, so the same function $f$ can be
used in the invariant. From now on we assume that at least one of
the coordinates of $c$ is not in $X'$.

Note that after spoiler's move, the new $X'$ includes the congruence
classes of the three components of $c$. We say that a triple is
\emph{completed} by spoiler's move if not all three components of the
triple are in $X'$ before spoiler's move, but the addition of these
congruence classes to $X'$ makes all three components of the triple
belong to the new $X'$. In particular, $c$ and its congruents are
completed by spoiler's move. The new $f$ after spoiler's move will be
defined to extend the old $f$ by assigning values to the components of
$c$ and its congruents in such a way that the conditions of nicety are
satisfied for the new $X'$. We need to distinguish several cases:

 Case 1: $c$ is a triple $(v,v_0,v_1)$ in which $v_0$ and $v_1$
  are not leaves, and $v$ is already in $X'$. Let $v_{00}$ and
  $v_{01}$ be the left and right sons of $v_0$, and let $v_{10}$ and
  $v_{11}$ be those of $v_1$.  We need the following claim:

\begin{claim} There exist values $i$, $i_0$, and $i_1$ in $D$ such that
  \begin{enumerate} \itemsep=0pt
    \item $i_0=f(v_0)$ if $v_0$ belongs to $X'$,
    \item $i_1=f(v_1)$ if $v_1$ belongs to $X'$,
    \item $i + i_0 + i_1 = 0$, where $i = f(v)$,
    \item $i_0+f(v_{00})+f(v_{01})=0$ if $v_{00}$ and $v_{01}$
     belong to $X'$,
  \item $i_1+f(v_{10})+f(v_{11})=0$ if $v_{10}$ and $v_{11}$
     belong to $X'$.
  \end{enumerate} 
  \end{claim}
  \begin{proof}
    Since not all three $v$, $v_0$ and $v_1$ are in $X'$ but $v$ is in
    $X'$, at most one among $v_0$ and $v_1$ is in $X'$. It follows
    that not all four $v_{00}$, $v_{01}$, $v_{10}$, and $v_{11}$ can
    be in $X'$. To argue for this, note that at most two pebbles occupy
    at most two triples $t_1$ and $t_2$ before spoiler's move, but it
    cannot be the case that $t_1 \cong (v_0,v_{00},v_{01})$ and $t_2
    \cong (v_1,v_{10},v_{11})$ if not both $v_0$ and $v_1$ are in
    $X'$.  Moreover, for the same reason, if both $v_{00}$ and
    $v_{01}$ are in $X'$, then $v_1$ is not in $X'$, and if both
    $v_{10}$ and $v_{11}$ are in $X'$, then $v_0$ is not in $X'$. We
    use this to choose $i_0$ and $i_1$ by cases.

    Case (i): both $v_{00}$ and $v_{01}$ are in $X'$. First choose
    $i_0$ to satisfy condition~\emph{4} and then choose $i_1$ to
    satisfy condition~\emph{3}. Note that in case condition~\emph{1}
    also applies, then the only choice of $i_0$ that makes
    condition~\emph{4} hold is guaranteed to satisfy
    condition~\emph{1} too by the first condition of nicety of
    $f$. Note also that in this case conditions~\emph{2} and~\emph{5}
    do not apply.

    Case (ii): both $v_{10}$ and $v_{11}$ are in $X'$. First choose
    $i_1$ to satisfy condition~\emph{5} and then choose $i_0$ to
    satisfy condition~\emph{3}. Again, note that in case
    condition~\emph{2} also applies, then the only choice of $i_1$
    that makes condition~\emph{5} hold is guaranteed to satisfy
    condition~\emph{2} too by the first condition of nicety of $f$. Note
    also that in this case conditions~\emph{1} and~\emph{4} do not
    apply.

    Case (iii): otherwise. In this case the only conditions that can
    apply are~\emph{1},~\emph{2}, and~\emph{3}, and among~\emph{1}
    and~\emph{2} at most one can apply. In case~\emph{1} applies and
    $v_0$ is in $X'$, first choose $i_0$ to satisfy condition~\emph{1}
    and then choose $i_1$ to satisfy condition~\emph{3}. In
    case~\emph{2} applies and $v_1$ is in $X'$, first choose $i_1$ to
    satisfy condition~\emph{2} and then choose $i_0$ to satisfy
    condition~\emph{3}.  \end{proof}

Case 2: $c$ is a triple $(v,v_0,v_1)$ in which $v_0$ and $v_1$
  are not leaves, and $v$ is not yet in $X'$. Let $v_{00}$ and $v_{01}$ be
  the left and right sons of $v_0$, and let $v_{10}$ and $v_{11}$ be
  those of $v_1$.  Since $v$ is not in $X'$, it is not a root. Let
  then $w$ be the sibbling of $v$, and let $u$ be their parent. We
  need the following claim:

\begin{claim} There exist values $i$, $i_0$, and $i_1$ in $D$ such
that
\begin{enumerate} \itemsep=0pt
\item $i_0 = f(v_0)$ if $v_0$ belongs to $X'$,
\item $i_1 = f(v_1)$ if $v_1$ belongs to $X'$,
\item $i + i_0 + i_1 = 0$,
\item $f(u) + f(w) + i = 0$ if $w$ and $u$ belong to $X'$,
\item $i_0 + f(v_{00}) + f(v_{01}) = 0$ if $v_{00}$ and $v_{01}$ belong to $X'$,
\item $i_1 + f(v_{10}) + f(v_{11}) = 0$ if $v_{10}$ and $v_{11}$ belong to $X'$.
\end{enumerate}
\end{claim}

\begin{proof}
  If both $v_0$ and $v_1$ are in $X'$, we argue that $w$ and $u$ are
  not in $X'$. To see this, note that at most two pebbles occupy at
  most two triples before spoiler's move. But if both $v_0$ and $v_1$
  are in $X'$, then these triples must contain nodes that are
  congruent to $v_0$ and $v_1$, and be different and hence different
  from any triple that contains a node congruent to $u$ or $w$, since
  all triples that contain both $v_0$ and $v_1$ are congruent to
  $c$. Thus, in case both $v_0$ and $v_1$ are in $X'$, we choose $i_0
  = f(v_0)$ and $i_1 = f(v_1)$, and $i$ to satisfy
  condition~\emph{3}. Conditions~\emph{1},~\emph{2} and~\emph{3} are
  then true by construction, condition~\emph{4} does not apply, and
  conditions~\emph{5} and~\emph{6} hold because $f$ is nice with
  respect to $X'$.

  Assume then that not both $v_0$ and $v_1$ are in $X'$. In such a
  case we argue that not all four $v_{00}$, $v_{01}$, $v_{10}$, and
  $v_{11}$ can be in $X'$. To see this, note again that at most two
  pebbles occupy at most two triples $t_1$ and $t_2$, and it cannot be
  that $t_1 \cong (v_0,v_{00},v_{01})$ and $t_2 \cong
  (v_1,v_{10},v_{11})$ if not both $v_0$ and $v_1$ are in $X'$.
  Moreover, for the same reason, if both $v_{00}$ and $v_{01}$ are in
  $X'$, then $v_1$ is not in $X'$, and if both $v_{10}$ and $v_{11}$
  are in $X'$, then $v_0$ is not in $X'$. We use this to choose $i_0$
  and $i_1$ by cases. In all cases we first choose $i$ to satisfy
  condition~\emph{4}.

  Case (i): both $v_{00}$ and $v_{01}$ are in $X'$. First choose $i_0$
  to satisfy condition~\emph{5} and then choose $i_1$ to satisfy
  condition~\emph{3}. Note that in case condition~\emph{1} also
  applies, then the only choice of $i_0$ that makes condition~\emph{5}
  hold is guaranteed to satisfy condition~\emph{1} too by the first
  condition of nicety of $f$. Note also that in this case
  conditions~\emph{2} and~\emph{6} do not apply.

    Case (ii): both $v_{10}$ and $v_{11}$ are in $X'$. First choose
    $i_1$ to satisfy condition~\emph{6} and then choose $i_0$ to
    satisfy condition~\emph{3}. Again, note that in case
    condition~\emph{2} also applies, then the only choice of $i_1$
    that makes condition~\emph{6} hold is guaranteed to satisfy
    condition~\emph{2} too by the first condition of nicety of $f$. Note
    also that in this case conditions~\emph{1} and~\emph{5} do not
    apply.

    Case (iii): otherwise. In this case the only conditions that can
    apply are~\emph{1},~\emph{2}, and~\emph{3} (and~\emph{4}), and
    among~\emph{1} and~\emph{2} at most one can apply. In
    case~\emph{1} applies and $v_0$ is in $X'$, first choose $i_0$ to
    satisfy condition~\emph{1} and then choose $i_1$ to satisfy
    condition~\emph{3}. In case~\emph{2} applies and $v_1$ is in $X'$,
    first choose $i_1$ to satisfy condition~\emph{2} and then choose
    $i_0$ to satisfy condition~\emph{3}.  \end{proof}

 Case 3 (and last): $c$ is a triple $(v,v_0,v_1)$ in which $v_0$
  and $v_1$ are leaves. Since $v$ is not a root, let $w$ be its
  sibling, and let $u$ be their parent.

\begin{claim} There exist values $i$, $i_0$, and $i_1$ in $D$ such
that
\begin{enumerate} \itemsep=0pt
\item $i = f(v)$ if $v$ belongs to $X'$,
\item $i_0 = f(v_0)$ if $v_0$ belongs to $X'$,
\item $i_1 = f(v_1)$ if $v_1$ belongs to $X'$,
\item $i + i_0 + i_1 = 0$,
\item $f(u) + f(w) + i = 0$ if $w$ and $u$ belong to $X'$,
\end{enumerate}
\end{claim}

\begin{proof} As in the previous case, if both $v_0$ and $v_1$ are in
  $X'$, then $w$ and $u$ are not in $X'$, but the argument to show why
  this is the case is slightly different. First note that if both
  $v_0$ and $v_1$ are in $X'$ then $v$ is not in $X'$ because not all
  three components of $c$ are in $X'$ by assumption. Second, at most
  two pebbles occupy at most two triples before spoiler's move. If
  both $v_0$ and $v_1$ are in $X'$, then these triples must contain
  $v_0$ and $v_1$, which are congruent only to themselves, and be
  different and hence different from any triple that contains a node
  congruent to $u$ or $w$, since all the triples that contains both
  $v_0$ and $v_1$ are congruent to $c$. Thus, in case both $v_0$ and
  $v_1$ are in $X'$, we choose $i_0 = f(v_0)$ and $i_1 = f(v_1)$, and
  $i$ to satisfy condition~\emph{4}.  Conditions~\emph{1} and~\emph{5}
  just do not apply.

  Assume then that not both $v_0$ and $v_1$ are in $X'$. In such a
  case, first choose $i$ to satisfy condition~\emph{5}. Note that if
  condition~\emph{1} also applies, then the unique choice that
  satisfies~\emph{5} also satisfies~\emph{1} by the first condition of
  the nicety of $f$. Once $i$ is chosen, choose either $i_0$ or $i_1$
  to satisfy whichever condition among~\emph{2} or~\emph{3} applies,
  if any, and then choose the other to satisfy condition~\emph{4}.
\end{proof}

This completes the cases analysis over $c$. Now, fix $i$, $i_0$, and
$i_1$ as in the claim in whichever of the three cases applies. We
claim that $f$ can be extended to a function $g$ that is defined on
$v$, $v_0$, and $v_1$ so that $g(v) = i$, $g(v_0)=i_0$, and
$g(v_1)=i_1$, and that is nice with respect to the new $X'$.  Indeed,
let $Y$, $Y_0$, and $Y_1$ denote the sets of nodes that are congruent
to $v$, $v_1$, and $v_1$, respectively. We define the extension $g$ of
$f$ by setting $g(x) = i$ for all $x \in Y$, $g(x)=i_0$ for all $x\in
Y_0$, and $g(x)=i_1$ for all $x\in Y_1$. By the choices of $i$, $i_1$
and $i_2$ in the claims, and the third condition of nicety of $f$,
this is well defined for those $x$ on which $f$ was already
defined. Note that the domain of $g$ is precisely the value of $X'$
after spoiler's move. Let us argue that $g$ is nice with respect to
this new $X'$.

  First we note that on all triples that are congruent to
  $(v,v_0,v_1)$, its three components get the same three values
  $(g(v),g(v_0),g(v_1))$. This shows that $g$ satisfies the third
  condition of nicety with respect to the new $X'$. The second
  condition of nicety is also satisfied since $g$ extends $f$ and
  $f$ was nice with respect to the old $X'$, which contained all roots
  already. Finally, in order to argue that $g$ satisfies the first
  condition of nicety we need to argue which triples are completed
  by spoiler's move. The triple $c$ and its congruents are definitely
  completed and, for these, the condition $i + i_0 + i_1 = 0$ from the
  claims guarantees the first condition of nicety. The addition of
  $v$, $v_0$, and $v_1$ to $X'$ can complete the triples $(u,v,w)$,
  $(v_0,v_{00},v_{01})$, and $(v_1,v_{10},v_{11})$, when they exist,
  and their congruents, but no other triples. And for these, the
  conditions of the claims guarantee that the choices of $i$, $i_0$,
  and $i_1$ satisfy the first condition of nicety.
\end{proof}

Lemma~\ref{lem:incons1} and Lemma~\ref{lem:cons}
show that the diagram $L:\str A\to \str L,R:\str A\to \str R$ is confusing for the class of consistent structures.
Since $\str A$ can be taken arbitrarily large, 
Theorem~\ref{thm:cons} follows immediately from
Theorem~\ref{thm:necessary}.
\end{proof}

\subsection{Other finite Abelian groups}

The template $\str T_2$ for systems of equations over the 2-element
field can be generalized to all finite Abelian groups. Let $G$ be a
finite Abelian group; we write $+$ for the group operation and $0$ for
its neutral element. Let $\str T_G$ be the structure with domain $D
\cup R$, where
  \begin{align*}
  D &= G, \\ 
  R &= \set{(x,y,z) \in D^3 : x+y+z=0 }.
  \end{align*}
The elements of $D$ are called values, and those of $R$ are called
triples. As in $\str T_2$, the signature of $\str T_G$ has three
binary relations $\pi_1$, $\pi_2$, and $\pi_3$, two unary relations
\emph{value} and \emph{triple}, and one unary relation $C_a$ for each
value $a$ in $D$.  The interpretations of all relation symbols are as
in $\str T$; in particular, the unary relation symbol $C_a$ is
interpreted by the singleton set $\{a\}$. It is straightforward to
check that $\str T_G$ can be used to encode arbitrary systems of
equations over $G$. As in the 2-element field case, equations more
complex than the basic $x + y + z = 0$ or $x = a$ can be encoded with
the help of auxiliary variables.

\begin{theorem}\label{thm:consgeneral}
  If $G$ is a finite Abelian group with at least two elements, then
  the class of all finite structures that are $(2,3)$-consistent with
  respect to $\str T_G$ is not homogenizable.
\end{theorem}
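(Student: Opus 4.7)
The proof is essentially a verbatim adaptation of the proof of Theorem~\ref{thm:cons}, where the only substantive replacement is that the arithmetic modulo $2$ is replaced by arithmetic in the Abelian group $G$. My plan is the following.

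Since $|G|\ge 2$, fix any two distinct elements $a_0,b_0\in G$. Let $n$ be a power of two, let $t$ be the complete binary tree of depth $\log_2 n$, and construct the instance $\str I$ from $t$ exactly as in the proof of Theorem~\ref{thm:cons}. Define $\str L=m_{a_0}(\str I)$ and $\str R=m_{b_0}(\str I)$ (marking the roots with $C_{a_0}$ and $C_{b_0}$, respectively), let $\str A$ be the substructure of $\str I$ induced by the leaves, and let $L:\str A\to\str L$ and $R:\str A\to\str R$ be the inclusions. I then need to prove the two lemmas analogous to Lemma~\ref{lem:incons1} and Lemma~\ref{lem:cons}, after which the theorem follows from Theorem~\ref{thm:necessary} exactly as before.

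For the inconsistency of the free amalgam, I run spoiler's drilling strategy word-for-word. The invariant asserts that, at the beginning of step $n$, the two pebbles of spoiler sit on corresponding nodes $a$ in $\str I_L$ and $b$ in $\str I_R$ at depth $n-1$, with $r(a)\ne r(b)$ in $G$. This holds initially since $a_0\ne b_0$. For the inductive step, after spoiler slides to triples $a'$ and $b'$ below $a$ and $b$, the duplicator's responses must satisfy the equation $\ftr+\lft+\rgt=0$ in $G$, so
\[
\lft(r(a'))+\rgt(r(a')) = -r(\ftr(a')) \ne -r(\ftr(b')) = \lft(r(b'))+\rgt(r(b')),
\]
which forces $\lft(r(a'))\ne\lft(r(b'))$ or $\rgt(r(a'))\ne\rgt(r(b'))$. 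Spoiler drills down on the differing side, preserving the invariant; eventually both pebbles sit on the same leaf of the free amalgam with distinct duplicator values, a contradiction. Lemma~\ref{lem:incons1}'s analogue then follows from closure of $\calC$ under inverse homomorphisms, since every amalgam is a homomorphic image of the free amalgam.

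For the consistency of $\str J^C$, I carry over the winning strategy of duplicator in the modified pebble game. The notions of congruence, the sets $X$ and $X'$, and the definition of a \emph{nice function} $f:X'\to G=D$ (now requiring $f(x)+f(y)+f(z)=0$ in $G$, $f(r)=i$ for every root marked $C_i$, and $f(x)=f(y)$ whenever $x\cong y$) transfer verbatim. The three-case analysis (the new triple $c$ has two internal non-leaf children with $v$ already in $X'$; the same but $v\notin X'$; or $c$'s children are leaves) proceeds identically; at every step I need to produce values $i,i_0,i_1\in G$ satisfying a small system of equations and consistency requirements with $f$. The only algebraic fact used by these arguments is that $x+y+z=0$ is a uniquely solvable linear equation in any one variable given the other two, which holds in any Abelian group. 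Hence the sub-case analysis (i), (ii), (iii) in each of the three Claims goes through unchanged, where the first condition of nicety of $f$ again guarantees that if two equations both constrain a newly chosen value, they prescribe the same element of $G$.

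The main technical point to double-check is that the pebble-counting observations used inside the three Claims (for instance, the observation that if both $v_0$ and $v_1$ lie in $X'$ then $u$ and $w$ cannot) depend only on the combinatorics of congruence classes and of the at most two previously pebbled triples, not on the ambient algebra. I expect no obstacle here since the underlying structure $\str J^C$ is defined identically to the one in the proof of Theorem~\ref{thm:cons}. Having established both lemmas, since the order of the confusing diagram $L:\str A\to\str L,\;R:\str A\to\str R$ is $n$, which may be taken arbitrarily large, Theorem~\ref{thm:necessary} yields that the class of $(2,3)$-consistent instances with respect to $\str T_G$ is not homogenizable.
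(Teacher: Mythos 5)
Your proposal is correct and matches the paper's proof exactly: the paper simply observes that the entire argument for Theorem~\ref{thm:cons} uses nothing about $\mathbb{F}_2$ beyond $G$ being Abelian and having at least two elements, which is precisely the observation you spell out in detail. The key points you identify — that spoiler's invariant only needs two distinct starting values, that $x+y+z=0$ has a unique solution in any one variable over any Abelian group, and that the pebble-counting in the nicety claims is purely combinatorial — are exactly the reasons the generalization is immediate.
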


\begin{proof} The proof
of Theorem~\ref{thm:cons} does not rely in any way on the fact that
the group is addition mod~2, except for it being Abelian and having at
least two different values in it. \end{proof}

It is known that, for any non-trivial finite Abelian group, the
constraint satisfaction problem of the template $\str T_G$ has
\emph{unbounded width}, i.e.\ for every two natural numbers $k$ and
$l$ there exist instances $\str I$ that do not have homomorphisms to
$\str T_G$, but are nonetheless $(k,l)$-consistent with respect to
$\str T_G$. We also say that $\str T_G$ does not have $(k,l)$-width
for any $k$ and $l$.  This was proved by Feder and Vardi
\cite{FederVardi1998} for the standard template for linear equations
mod 2, and later alternative proofs generalize quite well to the
template $\str T_G$ (see, for instance,
\cite{AtseriasBulatovDawar2009}). Moreover, the solution to the
Bounded-Width Conjecture of Barto and Kozik \cite{BartoKozik2014} implies
that all cases of templates of unbounded width are explained by the
unbounded width of some $\str T_G$. Technically:
 
\begin{theorem}[\cite{BartoKozik2014}, see also Theorem 4.1 in 
\cite{Barto2015}]\label{thm:bartokozik}
\oldnote{The closest published statement that I found is Theorem 4.1
  in the survey by Barto in the Bulletin of Symbolic Logic.}  Let
$\str T$ be a core finite relational structure. If $\str T$ does not
have bounded width, then $\str T$ pp-interprets $\str T_G$ for some
non-trivial finite Abelian group $G$. Moreover, if the signature of $\str T$
has maximum arity at most two, then the conclusion holds even if
$\str T$ does not have $(2,3)$-width.
\end{theorem}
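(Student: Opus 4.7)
The plan is to adopt the algebraic (polymorphism) approach to CSPs. Write $\mathrm{Pol}(\str T)$ for the clone of polymorphisms of $\str T$. Because $\str T$ is a core, pp-interpretability of a template $\str S$ in $\str T$ is controlled by the existence of a suitable clone homomorphism from $\mathrm{Pol}(\str T)$ to $\mathrm{Pol}(\str S)$ (possibly passing through subpowers and quotients). So the goal reduces to producing such a clone homomorphism from $\mathrm{Pol}(\str T)$ to $\mathrm{Pol}(\str T_G)$ for some non-trivial finite Abelian group $G$.

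First I would invoke the algebraic characterization of bounded width: a core $\str T$ has bounded width if and only if the variety generated by $\mathrm{Pol}(\str T)$ omits the tame congruence types $\mathbf{1}$ (essentially unary) and $\mathbf{2}$ (affine), equivalently admits weak near-unanimity polymorphisms of every sufficiently large arity satisfying the meet semi-distributivity identities SD$(\wedge)$. Consequently, if $\str T$ fails to have bounded width, then a type $\mathbf{1}$ or $\mathbf{2}$ minimal subfactor must appear in some finite algebra built from $\mathrm{Pol}(\str T)$.

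The main obstacle is the second step: ruling out type $\mathbf{1}$ and extracting a genuine affine (type $\mathbf{2}$) factor. Type $\mathbf{1}$ is excluded using that $\str T$ is a core together with Taylor's theorem (a Taylor term must exist in any tractable-looking situation, and the core property forbids essentially unary behaviour). What remains is a type $\mathbf{2}$ subfactor: a congruence quotient of a subalgebra of some finite power that is polynomially equivalent to a module over a finite ring, with non-trivial underlying Abelian group $G$. Producing this quotient concretely — tracking it through absorbing subuniverses, exploiting the loop lemma, and running the delicate absorption theory that controls how affine factors sit inside $\mathrm{Pol}(\str T)$ — is the technically demanding heart of the argument.

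Finally, I would convert this affine factor into an explicit pp-interpretation of $\str T_G$ in $\str T$: the interpreted domain is a pp-definable subset of some power $T^n$ modulo a pp-definable equivalence; each unary predicate $C_a$ is interpreted by the coset of $a \in G$ under this equivalence; and the ternary relation $x+y+z=0$ is pp-definable as the affine relation coming from the module structure. For the ``moreover'' clause, when $\str T$ has only unary and binary relations, the bounded-width characterization sharpens so that failure of $(2,3)$-consistency to decide $\csp(\str T)$ already forces failure of omitting types $\mathbf{1}$ and $\mathbf{2}$; the same pipeline then applies verbatim with $(2,3)$-width as the input hypothesis.
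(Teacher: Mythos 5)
The first thing to note is that the paper does not prove this statement at all: Theorem~\ref{thm:bartokozik} is imported verbatim from the literature, attributed to Barto and Kozik \cite{BartoKozik2014} and to Theorem~4.1 of Barto's survey \cite{Barto2015}, and is then used as a black box in the discussion following Theorem~\ref{thm:consgeneral}. So there is no internal argument to compare yours against; what can be assessed is whether your sketch is a faithful reconstruction of the known proof.

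Your outline follows the standard algebraic pipeline (polymorphism clones, the omitting-types/SD$(\wedge)$ characterization of bounded width, extraction of an affine factor, translation back into a pp-interpretation of $\str T_G$), but it has one genuine error: the treatment of type $\mathbf{1}$. Being a core does \emph{not} exclude type $\mathbf{1}$, and there is no tractability hypothesis available here --- the 3-SAT template is a core of unbounded width whose polymorphisms are essentially trivial, so type $\mathbf{1}$ does occur in the situation covered by the theorem. The correct argument does not exclude this case but absorbs it: if the variety generated by $\mathrm{Pol}(\str T)$ admits type $\mathbf{1}$ (equivalently, $\str T$ has no Taylor polymorphism), then $\str T$ pp-interprets \emph{every} finite structure, in particular $\str T_G$, and one is done; the type $\mathbf{2}$ case then proceeds as you describe, via a module subfactor giving the affine relations $x+y+z=0$ and the constants $C_a$. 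A second, milder point of emphasis: in this contrapositive formulation the ``technically demanding heart'' is not the extraction of the affine factor (which is classical tame-congruence-theory plus the Bulatov--Jeavons--Krokhin correspondence between pp-interpretations and clone homomorphisms), but rather the Barto--Kozik theorem itself --- that omitting types $\mathbf{1}$ and $\mathbf{2}$ implies bounded width --- together with the collapse of the bounded-width hierarchy to width $(2,3)$, which is exactly what the ``moreover'' clause for binary signatures rests on and which you invoke only implicitly. With the type $\mathbf{1}$ case repaired and those two results cited explicitly, your sketch matches the argument in the cited sources.
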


Thus, the templates $\str T_G$ are in a strict formal sense the
canonical templates of unbounded width.  Theorem~\ref{thm:consgeneral}
states that, for all such templates, their class of locally consistent
instances is non-homogenizable.  Interestingly, the converse to this
is also true in a strong sense: for all templates that \emph{do have}
bounded width, their class of locally consistent instances \emph{is}
homogenizable.  This follows quite directly from the fact that, for
every finite template $\str T$, the class of instances $\str I$ that
have a homomorphism to $\str T$ is homogenized by expanding them by
all their homomorphisms to $\str T$.  When these two observations are
put together, we get that, up to the relation of pp-interpretability
between templates, which is known to preserve the property of having
bounded width, the templates that have bounded width are distinguished
from those that do not by the fact that their classes of locally
consistent instances are homogenizable.  It seems plausible that
our Theorem~\ref{thm:consgeneral} could be adapted to show that
\emph{all} templates of unbounded width give themselves a
non-homogenizable class of locally-consistent instances, without the
need to resort to pp-interpretability, but this remains open.


\section{Classes defined by forbidden homomorphisms}
\label{sec:forbidden}
The positive result of \hubicka and
\nesetril~\cite{HubickaNesetril2015} shows that if $\calF$ is an
HN-regular class of finite connected structures, then the class
$\forbh(\calF)$ is a reduct of an amalgamation class. HN-regularity is
a notion reminiscent of the notion of regularity of word languages or
of tree languages. Indeed, in the case of structures of treewidth one,
HN-regularity and regularity in the sense of tree automata both
correspond to MSO-definability.  In this section we give an example of
a non-homogenizable class of finite structures that is of the form
$\forbh({\cal G})$, where ${\cal G}$ is an MSO-definable class of
connected finite structures of treewidth two.
It follows from the result of \hubicka and \nesetril that this is optimal.
 Recall that the
treewidth of a finite structure is defined as the treewidth of its
Gaifman graph, and pathwidth is a restriction of treewidth (for
definitions see, for example, \cite{CourcelleEngelfriet2012}).


\subsection{Pathwidth three}

Consider the class $\cal F$ from the running example in
Section~\ref{sec:necessary}. This was shown non-homogenizable in
Example~\ref{ex:run3}.  The class $\cal F$ can be defined by an MSO
sentence, which expresses that there are exactly four colored points,
which are colored~$R$, $B$, $S$, and~$T$, respectively, and the rest
of points form a directed simple $\vec E$-path from $S$ to $T$ with
all vertices along the path connected by an undirected $E$-edge to
both $R$ and $B$. Moreover, each structure $\str F$ in $\cal F$ is
connected and has pathwidth three: take the path-decomposition of the
$\vec E$-path with bags of size~$2$ and add both red and blue vertices
to each bag.  This gives a path-decomposition with bags of size~$4$,
so its pathwidth is~$3$ (thanks to the $-1$ in the definition of
treewidth/pathwidth).
%
Now we show how to modify the class $\calF$ to obtain a class of structures of treewidth~two.

\subsection{Treewidth two}\label{sec:tw2}
Consider a rooted, directed binary tree, in which every
node is either a leaf, or an inner node with
two sons, in which case it has a directed $\vec E_0$-edge
to its left son and a directed $\vec E_1$-edge to its right son. Color its root red, by labeling it with the unary predicate $R$, and create an extra blue vertex, with unary predicate $B$, connected to all the leaves of the tree by an undirected $E$-edge.
An example of such a structure, obtained from a full binary 
tree of depth $4$, is  depicted in
Figure~\ref{fig:structure2}. Let $\cal G$ denote the class of all structures obtained in this way.
\begin{figure}[ht]
\begin{center}
\includegraphics[height=2in]{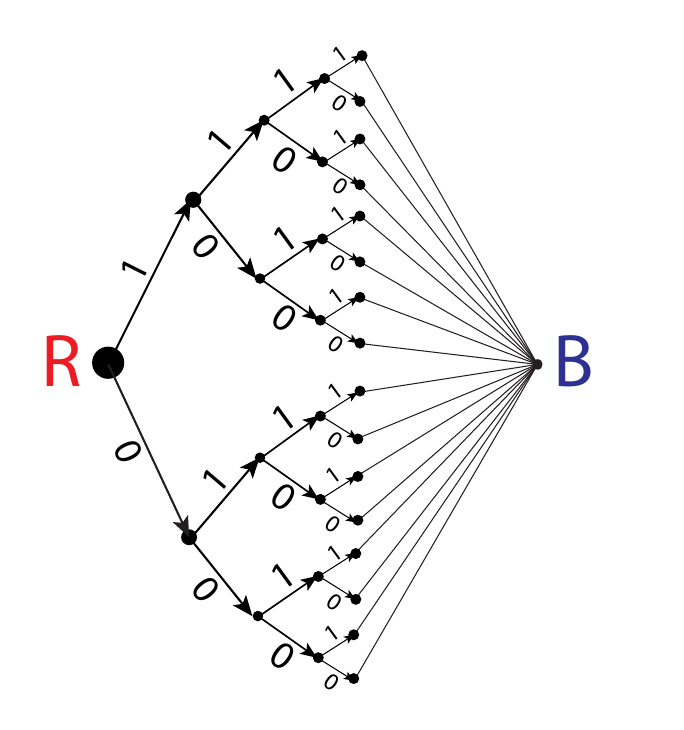}
\end{center}
\caption{Forbidden structure $\str G$, with 16 leaves. \label{fig:structure2}}
\end{figure}
The signature $\Sigma$ of these structures consists of three binary
predicates~$E$, $\vec E_0$, and $\vec E_1$ for the edges, and two unary predicates~$R$
and~$B$, each appearing in the structure exactly once as indicated.
It is straightforward to check that $\cal G$ is MSO-definable on the
class of all finite structures.
 Moreover, each structure $\str G$ in $\cal G$ is connected
and has treewidth two: just take a tree-decomposition of the
binary tree and add the blue point to all its bags.
\begin{claim}\label{claim:antichain}
The class $\cal G$ forms an antichain in the homomorphism preorder.  
\end{claim}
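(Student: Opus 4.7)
The plan is to show that any homomorphism $h : \str G \to \str G'$ between two members of $\cal G$ is forced to implement an isomorphism of the underlying full binary trees, from which $\str G \cong \str G'$ follows immediately since each structure in $\cal G$ is determined by its tree.

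First I would pin down the distinguished vertices. Since the unary predicates $R$ and $B$ each occur exactly once in every member of $\cal G$, the root of $\str G$ must be sent to the root of $\str G'$ and the blue vertex of $\str G$ to the blue vertex of $\str G'$. Using that each internal node has exactly one $\vec E_0$-child and exactly one $\vec E_1$-child, I would then prove by induction on depth that any tree node of $\str G$ reachable from the root by a directed path $p \in \{0,1\}^*$ (reading $0$ and $1$ as $\vec E_0$- and $\vec E_1$-steps) is sent by $h$ to the unique tree node of $\str G'$ reachable from its root by the same path $p$. The subtle point is that the path $p$ must actually exist as a node-path in $\str G'$, which would fail if some prefix of $p$ corresponded to a leaf of $\str G'$, since leaves have no outgoing $\vec E_c$-edges.

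Next I would exploit that a tree node is a leaf precisely when it is adjacent to the blue vertex via $E$. Since $h$ preserves $E$ and sends blue to blue, each leaf of $\str G$ goes to a vertex of $\str G'$ adjacent to the blue vertex, hence to a leaf of $\str G'$. Combined with the previous step, every leaf-path of $\str G$ is a leaf-path of $\str G'$: writing $L_{\str G}, L_{\str G'} \subseteq \{0,1\}^*$ for the sets of leaf-paths, we obtain the inclusion $L_{\str G} \subseteq L_{\str G'}$.

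The final step is to upgrade this inclusion to equality, using that both $L_{\str G}$ and $L_{\str G'}$ arise from full finite binary trees. Suppose for contradiction that some $q \in L_{\str G'} \setminus L_{\str G}$. Either $q$ is not a node-path of $\str G$, in which case its longest prefix that is a node-path of $\str G$ must already be a leaf-path of $\str G$ (since internal nodes have both children), hence belongs to $L_{\str G} \subseteq L_{\str G'}$ and is a proper prefix of $q \in L_{\str G'}$, which is impossible; or $q$ is an internal node-path of $\str G$, in which case it extends to some $q' \in L_{\str G} \subseteq L_{\str G'}$ properly extending $q$, again contradicting $q \in L_{\str G'}$. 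Hence $L_{\str G} = L_{\str G'}$, the underlying trees coincide, and $\str G \cong \str G'$. I expect this last case analysis to be the main technical point; the earlier steps are essentially a forced-image chase driven by the uniqueness of the colored vertices and of $\vec E_c$-children.
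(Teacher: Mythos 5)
Your proof is correct and follows essentially the same route as the paper's: root must go to root and blue to blue by uniqueness of the unary predicates, leaves go to leaves because only leaves are $E$-adjacent to blue, and the path-following argument by induction (using that $h$ preserves $\vec E_0,\vec E_1$-edges, which automatically forces images of internal nodes to be internal) pins down $h$ on the tree nodes. Your final step, phrased via the observation that the leaf-path sets of full binary trees are maximal prefix-antichains so one cannot properly contain another, is a spelled-out version of the paper's terser remark that no inner node can be sent to a leaf, hence the map is onto; both arguments lead to the same conclusion that the underlying trees, and therefore the structures, coincide up to isomorphism.
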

\begin{proof}
  Suppose that $h:\str G_1\to\str G_2$ is a homomorphism of two
  structures in $\cal G$.  Then $h$ must map the root of $\str G_1$ to
  the root of $\str G_2$ (since only the root is colored red), and
  must map the leaves of $\str G_1$ to the leaves of $\str G_2$ (since
  only the leaves are adjacent to a blue node). Finally, a vertex $v$
  in $\str G_1$ reached from the root by a path with labels
  $i_1i_2\ldots i_k\in\set{0,1}^*$ must be mapped to the unique vertex
  $w$ of $\str G_2$ reached from the root by the path obtained by
  reading the same
  labels. Hence, the mapping $f$ is injective. Since no inner node of
  the tree can be mapped to a leaf, $f$ must also be surjective. It
  follows that $h$ is an isomorphism.
\end{proof}

%
%

\begin{proposition} \label{prop:treewidthtwo}
  The class $\forbh({\cal G})$ is not homogenizable.
\end{proposition}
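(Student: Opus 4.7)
The plan is to apply Theorem~\ref{thm:necessary} by exhibiting, for each $k \ge 1$, a confusing diagram for $\mathcal{C} := \forbh(\mathcal{G})$ whose $\str{A}$ has $n = 2^k$ elements. Inspired by Example~\ref{ex:run2}, but splitting $\str{G}$ along its leaves rather than along a path, I let $T$ be the node set of a full binary tree of depth $k$ with root $r$. Define $\str{A}$ to have domain the set of leaves of $T$ with all relations empty; $\str{L}$ to be the binary tree $(T; \vec{E}_0, \vec{E}_1)$ enriched with $R^{\str{L}} = \{r\}$; and $\str{R}$ to be the same binary tree together with a fresh vertex $b$ with $B^{\str{R}} = \{b\}$ and an $E$-edge between $b$ and each leaf. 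For $k \ge 1$, the induced substructure of either $\str{L}$ or $\str{R}$ on the leaves coincides with $\str{A}$, so the inclusions $L : \str{A} \to \str{L}$ and $R : \str{A} \to \str{R}$ are embeddings.

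To see that $L, R$ witness failure of amalgamation, observe that the free amalgam $\str{L} \cup_{\str{A}} \str{R}$ consists of the shared leaves together with two disjoint copies of the internal nodes of $T$, each spanning a binary tree above the leaves; one copy carries the red root and the other carries the blue vertex $b$ with its $E$-edges to the leaves. Taking $\str{G}_k \in \mathcal{G}$ to be the structure built over a full binary tree of depth $k$, the map sending its tree isomorphically onto the $\str{L}$-copy and its blue vertex to $b$ is a homomorphism into this free amalgam. Since any amalgam of $\str{L}$ and $\str{R}$ receives a homomorphism from the free amalgam, no amalgam belongs to $\mathcal{C}$.

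The technical crux is to verify that $\str{J}^C \in \mathcal{C}$ for every $m \ge 1$ and every coloring $C : \mathcal{E}_{\str{A}, m} \to \{L, R\}$. The base $\str{J} = \str{A} \otimes m$ is a discrete set of $nm$ vertices, and $\mathcal{E}_{\str{A}, m}$ is parameterized by functions $f : \text{leaves}(T) \to [m]$ via $\pi_f(l) = (l, f(l))$. Crucially, in $\str{J}^C$ the internal nodes of the fresh $\str{L}$-copies and the blue vertices of the fresh $\str{R}$-copies are disjoint across distinct $f$'s---only the shared leaves $(l, f(l))$ are identified. Consequently, the red vertices of $\str{J}^C$ are exactly the $\str{L}$-copy roots $\hat{r}_f$ for $f$ with $C(\pi_f) = L$, the blue vertices are exactly the $b_f$ for $f$ with $C(\pi_f) = R$, and the only $E$-edges of $\str{J}^C$ join each $b_f$ to $\{(l, f(l)) : l \text{ a leaf}\}$.

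Suppose now that $h : \str{G}' \to \str{J}^C$ is a homomorphism for some $\str{G}' \in \mathcal{G}$. Then $h$ sends the red root of $\str{G}'$ to some $\hat{r}_{f_L}$ with $C(\pi_{f_L}) = L$. Because each internal node of the $\str{L}$-copy for $f_L$ has unique $\vec{E}_0$- and $\vec{E}_1$-successors in $\str{J}^C$---its tree children in the same copy, since internal nodes of distinct fresh copies are disjoint---an induction on depth forces $h$ to send every node at path $w$ in $\str{G}'$ to the corresponding node at path $w$ in this $\str{L}$-copy. The shared leaves have no outgoing tree edges and internal nodes of any $\str{L}$-copy carry no $E$-edges in $\str{J}^C$, so $\str{G}'$ has depth at most $k$ and every leaf of $\str{G}'$ must sit at depth exactly $k$---otherwise it would be mapped to a vertex with no $E$-neighbor while being $E$-connected to $h(\text{blue})$. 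Hence $\str{G}'$ is a full binary tree of depth $k$ whose leaves are sent bijectively onto $\{(l, f_L(l))\}$. Writing $h(\text{blue}) = b_{f_R}$ for some $f_R$ with $C(\pi_{f_R}) = R$, the requirement that the $E$-neighbors $\{(l, f_R(l))\}$ of $b_{f_R}$ contain the image of the leaves forces $f_L = f_R$, contradicting $C(\pi_{f_L}) = L \neq R = C(\pi_{f_R})$. Thus the diagram is confusing of order $2^k$, and letting $k$ tend to infinity, Theorem~\ref{thm:necessary} gives that $\forbh(\mathcal{G})$ is not homogenizable. The main obstacle is this rigidity argument: showing that because internal nodes of different fresh copies are disjoint, once the red root is pinned down the entire tree of $\str{G}'$ is forced inside a single $\str{L}$-copy, so the $f_L$ coming from the red constraint and the $f_R$ coming from the blue constraint must coincide.
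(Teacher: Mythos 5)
Your proof is correct and follows essentially the same approach as the paper: split a structure in $\mathcal{G}$ along the set of leaves to get a diagram witnessing failure of amalgamation, then show that $\str J^C$ avoids homomorphic images of $\mathcal{G}$-structures by arguing that the red root pins down an $\str L$-copy, forcing the leaves of $\str G'$ onto $\{(l,f_L(l))\}$, which are not all adjacent to a common blue vertex. The only cosmetic difference is that you take $\str R$ to be the whole (uncolored) binary tree plus the blue vertex, whereas the paper's $\str R$ keeps only the blue vertex and the leaves adjacent to it; both choices work, the paper's being slightly leaner since the extra uncolored trees in your $\str J^C$ are inert but must be ruled out when tracing $h$.
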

\begin{proof}
  We apply Theorem~\ref{thm:necessary}.
  To this end, choose an arbitrary structure $\str G\in\cal G$, and consider the diagram $L:\str A\to \str L,
  R:\str A\to \str R$ defined as follows.
 $\str L$ is the left part of the structure $\str G$,
 obtained by removing the blue vertex (labeled $B$),
 $\str R$ is the right part of the structure $\str G$,
 obtained by keeping the blue vertex and the nodes adjacent to it, $\str A$ is the intersection of $\str L$ and $\str R$, i.e., the substructure of $\str G$ induced by the leaves of the underlying binary tree. Let $L:\str A\to\str L$ and $R:\str A\to \str R$ be the two inclusions.
 It is clear that every amalgamation of $L,R$ 
 must contain a homomorphic image of $\str G$, so
 $L,R$ witnesses failure of amalgamation of $\forbh(\cal G)$. Let $m$ be an arbitrary number,
 $\str J=\str A\otimes m$, $\Ee=\Ee_{\str A,m}$,
and $C:\Ee\to\set{L,R}$ be any coloring.
\begin{claim}\label{claim:confusing}
The structure $\str J^C$
does not contain a homomorphic image of any structure in $\cal G$. 
  \end{claim}
  
\begin{proof}
  Assume that $h:\str G'\to \str J^C$ is a homomorphism and $\str G'\in \cal G$. Then the root $v$ of $\str G'$ is mapped to some red vertex $h(v)$ in $\str J^C$,
  let $\pi:\str A\to \str J$ be the embedding corresponding to the vertex $h(v)$. In particular, $C(\pi)=L$.
  Let $f$ be the embedding of $\str L$ into $\str J^C$ 
  induced by the embedding $\pi:\str A\to\str J$ and $L:\str A\to \str L$. 
   By the same argument as in the proof
  of Claim~\ref{claim:antichain}, $h$ must map the nodes of $\str G'$ injectively into the structure $f(\str L)$.
  Moreover, $h$  maps the leaves of $\str G'$ into
  elements of $\str J\subseteq\str J^C$ (since only those
  vertices may be adjacent to a blue node). As in the proof 
  of Claim~\ref{claim:antichain}, it follows that $h$
  is a bijection from $\str G'$ without the blue node to the image of $f$.
  In particular, the leaves in $\str G'$ are mapped
  bijectively to the image of $\pi$. But since $C(\pi)=L$,
  it is impossible that all the nodes in the image of $\pi$
  are adjacent to a common blue vertex.
\end{proof}

Hence, the diagram $L,R$ is confusing for $\calC$.
Since $\str G$ can be chosen so that $\str A$ is arbitrarily large, the conclusion follows from Theorem~\ref{thm:necessary}.  
\end{proof}

\subsection{Optimality}

We argued already that the classes $\cal F$ from Example~\ref{ex:run2}
and $\cal G$ from Section~\ref{sec:tw2} are MSO-definable. Therefore,
the set of colored paths that represent the path-decompositions of
the structures in $\cal F$ is regular in the automata-theoretic sense,
and the set of colored trees that represent the tree-decompositions
of the structures in $\cal G$ is regular in the
tree-automata-theoretic sense (see \cite{CourcelleEngelfriet2012}).
It is interesting to check why $\cal F$ and $\cal G$ are \emph{not}
regular classes of structures in the sense of Definition~2.3 of
\hubicka-\nesetril \cite{HubickaNesetril2015}. By
Example~\ref{ex:run3} and Proposition~\ref{prop:treewidthtwo} we know
that $\cal F$ and $\cal G$ cannot be HN-regular as otherwise
$\forbh({\cal F})$ and $\forbh({\cal G})$ would be homogenizable by
Theorem~3.1 in~\cite{HubickaNesetril2015}.

In order to check that a class is not HN-regular it suffices to
identify minimal g-separating g-cuts of unbounded sizes in its
structures. For $\cal F$, note that the set of all vertices in the
$\vec E$-path is a minimal g-separating g-cut in $\str F_k$, and its
size is $k$ and hence unbounded. For $\cal G$, the set of all leaves
in the binary tree in any structure $\str G$ in $\cal G$ is a minimal
g-separating g-cut, and its size is also unbounded since all trees are
represented in $\cal G$.
%

To close this section we note that every MSO-definable class of finite connected structures
of treewidth one \emph{is} HN-regular. This follows from the fact
noted earlier that, for colored trees, HN-regularity, tree-automata
regularity, and MSO-definability are equivalent. In particular, by
Theorem~3.1 in \cite{HubickaNesetril2015}, every class of the form
$\forbh({\cal F})$, where $\cal F$ is an MSO-definable class of
connected finite structures of treewidth at most one, is
homogenizable.

\section{Conclusion}
We study homogenizability -- a combinatorial notion useful in computer
science (see
e.g.~\cite{Bodirsky2008},\cite{BojanczykKlinLasota2011},\cite{BojanczykSegoufinTorunczyk2013}).
Our main contribution is a necessary condition for homogenizability of
a class of finite structures.  We apply it to prove
non-homogenizability of a class related to constraint satisfaction
problems, consisting of locally consistent instances with respect to
the template for linear equations over a finite Abelian group, and a
class defined by forbidding homomorphisms from an MSO-definable class
of structures of treewidth two, which is tight by the positive result
of~\cite{HubickaNesetril2015}.

Our original motivation for studying the homogenizability of classes
of CSP instances came from an approach, first outlined in
\cite{AtseriasWeyer2009}, to characterize the finite templates that
are solvable by any sound consistency algorithm. This was applied in
\cite{AtseriasWeyer2009} to get (yet) a(nother) criterion to decide
solvability by the arc-consistency algorithm, and it was asked if this
kind of technology could also work for $(k,l)$-consistency. While our
negative results of Section~\ref{sec:consistent} rule out the direct
applicability of this method to $(k,l)$-consistency, perhaps an
indirect method could still work for these or other consistency
algorithms. For example, perhaps one could first decide if the class
of consistent instances with respect to the template $\str T$ is
homogenizable, and apply the method from \cite{AtseriasWeyer2009} only
in the relevant case that it is.

This raises the very interesting question of deciding whether a
finitely presented class of finite structures is homogenizable, and
for that we need conditions that are both necessary and
sufficient. Significant steps in that direction were taken in the
\hubicka-\nesetril paper \cite{HubickaNesetril2015} for classes of the
form $\forbh({\cal F})$. Perhaps special cases that are still enough
for the method to work are easier. Can we characterize the classes of
the form $\forbh({\cal F})$, with MSO-definable $\cal F$ say, that are
reducts of classes that are closed under induced substructures and
free amalgamation?  Such classes we call freely homogenizable.

\paragraph{Acknowledgments} First author partially funded
            by the European Research Council (ERC) under the European
            Union's Horizon 2020 research and innovation programme
            grant agreement ERC-2014-CoG 648276 AUTAR) and Ministerio
            de Econom\'{\i}a y Competitividad grant
            TIN2013-48031-C4-1-P TASSAT-2. Part of this work was done
            while the second author was visiting Universitat
            Polit\`ecnica de Catalunya funded by the European Research
            Council (ERC) under the European Union's Horizon 2020
            research and innovation programme grant agreement
            ERC-2014-CoG 648276 AUTAR).

\bibliographystyle{plain} \bibliography{biblio.bib}

\end{document}